\documentclass[journal]{IEEEtran}
\usepackage{amsmath,amssymb,amsfonts}
\usepackage{array}
\usepackage[caption=false,font=normalsize,labelfont=sf,textfont=sf]{subfig}
\usepackage[group-separator={,},group-minimum-digits=3]{siunitx}
\usepackage{textcomp}
\usepackage{stfloats}
\usepackage{url}
\usepackage{verbatim}
\usepackage{epstopdf}
\usepackage{graphicx}
\usepackage{balance}
\usepackage{mathrsfs}
\usepackage{cite}
\usepackage{xcolor}
\usepackage{booktabs}
\usepackage{threeparttable}
\usepackage{amsthm}          
\usepackage{algorithmicx}
\usepackage[noend]{algpseudocode}

\usepackage[linewidth=1pt]{mdframed}
\usepackage[colorlinks=true,
            linkcolor=blue,
            anchorcolor=blue,
            citecolor=blue]{hyperref}
\begin{document}

\title{\fontsize{16pt}{\baselineskip}\selectfont 
 {Unsupervised Semi-Parametric Plug-in Likelihood-Ratio Detection for Covert Communications in the Presence of Disco Reconfigurable Intelligent Surfaces}
} 
\author{ 
{
    Luyao~Sun,
    Huan~Huang,~\textit{Member,~IEEE},
    Yongxing~Song,
    Zhongxing~Tian,
    Hongliang~Zhang,~\textit{Member,~IEEE},
    Weidong~Mei,~\textit{Member,~IEEE}, 
    Dongdong~Zou,
    and Yi~Cai
}
\thanks{  

L.~Sun, H.~Huang, Y.~Song, Z.~Tian, D.~Zou,  Y.~Cai
are with the School of Electronic and Information Engineering, Soochow University, Suzhou, Jiangsu 215006, China (e-mail: lysun02@163.com; hhuang1799@gmail.com; yxsong03@163.com; zxtian@ieee.org; ddzou@suda.edu.cn; yicai@ieee.org). 

H.~Zhang is with the School of Electronics, Peking University, Beijing 100871, China 
(e-mail: hongliang.zhang92@gmail.com).

W. Mei is with the National Key Laboratory of Wireless Communications, University of Electronic Science and Technology of China, Chengdu 611731, China (e-mail: wmei@uestc.edu.cn).

}
}
\maketitle

\begin{abstract}
Covert communications, also referred to as low probability of detection (LPD) communications, provide a higher level of privacy protection than cryptography and physical-layer security (PLS) by hiding transmissions in the ambient environment.
In this work, we investigate covert communications in the presence of a disco reconfigurable intelligent surface (DRIS) deployed by the warden Willie, which reduces Willie's detection error probability (DEP), i.e., the sum of the false alarm rate (FAR) and the miss detection rate (MDR), and degrades the communication performance between Alice and Bob, without relying on either channel state information (CSI) or additional jamming power. 
However, the introduction of the DRIS makes it analytically intractable for Willie to construct the Neyman--Pearson (NP) detector, which is the optimal detector for monitoring potential covert transmissions between Alice and Bob.
To this end, we develop an unsupervised semi-parametric plug-in likelihood-ratio detector for Willie. 
The proposed detector retains the parametric Gamma reference model under the silent hypothesis without requiring prior knowledge of noise, and learns from unlabeled data a one-dimensional monotone normalizing flow model for the analytically intractable distribution under the transmission hypothesis. 
In particular, it exploits the structural prior inherent in covert communications that Willie's observations reduce to noise only when Alice and Bob are silent. The monitoring performance at Willie is evaluated in terms of DEP, while the communication impact on Alice and Bob is quantified by the signal-to-jamming-plus-noise ratio (SJNR). Simulation results verify the analysis and show that the proposed unsupervised plug-in likelihood-ratio detector achieves monitoring performance close to that of its supervised counterpart.
\end{abstract}
 
\begin{IEEEkeywords}
Covert communications, reconfigurable intelligent surface, physical layer security, normalizing flow, likelihood-ratio detection, channel aging.
\end{IEEEkeywords}

\section{Introduction}\label{Intro}
Due to the broadcast and superposition properties of wireless channels, wireless communication systems are vulnerable to malicious attacks~\cite{PLSsur1}. This vulnerability is particularly critical in the context of the Internet of Things (IoT), where transmitted signals may carry sensitive personal information, such as healthcare records and location information. Communication confidentiality is also imperative in security-sensitive applications, particularly in government and military scenarios. As a result, considerable research attention has been devoted to developing techniques that improve transmission security and protect user privacy~\cite{CCRef1}. Covert communications have emerged as a promising strategy~\cite{CCRef2,CCRef3,CCRef4}, aiming to conceal the very existence of transmissions by embedding them in ambient noise or legitimate traffic, thereby significantly reducing their probability of detection. Unlike conventional cryptographic techniques and physical-layer security (PLS) mechanisms, which primarily protect the confidentiality of transmitted content, covert communications offer a higher level of privacy protection by concealing whether communication occurs at all. 
 
\subsection{Related Works}
In~\cite{CCRef5}, a fundamental limit of covert communications was established: over $n$ additive white Gaussian noise (AWGN) channel uses, at most $o(\sqrt{n})$ bits can be transmitted while maintaining an arbitrarily low probability of detection (LPD) when the noise power on the channel between the transmitter Alice and the warden Willie is unknown, where $o(\sqrt{n})$ denotes a non-asymptotically tight upper bound on $\sqrt{n}$. Furthermore, if a lower bound on the noise power is available, the achievable covert throughput increases to $\mathcal{O}(\sqrt{n})$ bits, where $\mathcal{O}(\sqrt{n})$ is an asymptotically tight upper bound on $\sqrt{n}$.
Following the work in~\cite{CCRef5}, existing studies introduced techniques such as non-orthogonal multiple access (NOMA) and turbo encoding to improve covert communication performance~\cite{CCwork1,CCwork2}. 
Existing studies~\cite{CCwork5,CCwork6,CCwork7} examined the use of jammers or artificial noise to increase Willie's observation uncertainty, thereby making it harder for Willie to reliably detect Alice's activity.
Relay-assisted schemes~\cite{CCwork8,CCRelay11,CCRelay2} can also reduce the separability between Willie's two hypotheses, thus increasing his detection error probabilities.
In addition, the work in~\cite{CCwork11} proposed exploiting small-scale-fading-induced power variability to reduce Willie's detection reliability and attain LPD.

Recently, reconfigurable intelligent surfaces (RISs) have emerged as a promising approach for improving the performance and efficiency of wireless communication systems~\cite{CRIS1,CRIS2,CRIS3,CRIS4,CRIS5,CRIS6}. 
An RIS typically comprises a large number of reflecting elements that can dynamically adjust the amplitude and phase of incident signals through low-cost components such as PIN or varactor diodes~\cite{CRIS_coding}. 
The integration of RISs into wireless systems provides considerable performance gains without incurring significant increases in power consumption or deployment costs~\cite{CRIS_energy,CRIS_dl}. 
Moreover, RIS-assisted covert communications have attracted growing research interest~\cite{C_C_RIS1,C_C_RIS2,C_C_RIS3,C_C_RIS4,C_C_RIS5}. 
Previous studies have explored the use of single or multiple RISs to enhance covertness, including those with NOMA~\cite{C_C_RIS1}, artificial noise~\cite{C_C_RIS2}, finite blocklength coding with adaptive priors~\cite{C_C_RIS3}, and unmanned aerial vehicle (UAV)-based networks~\cite{C_C_RIS4,C_C_RIS5}.

Existing research on covert communications, with or without RISs, generally assumes channel reciprocity in time-division duplex (TDD) systems and treats it as exact or approximately valid. Although it was widely adopted, this assumption no longer holds when using time-varying disco RIS (DRIS)~\cite{MyWCMag} (i.e., an RIS with random and time-varying configurations) or RIS architectures with non-reciprocal inter-element connections~\cite{CRIS_Channel_reciprocity}. The use of DRIS in covert communications was first proposed in~\cite{huang2025simultaneously}, where a random and time-varying DRIS that acts like a ``disco ball'' was implemented to generate active channel aging (ACA). Consequently, it breaks TDD channel reciprocity, even within the channel coherence time. A DRIS can be implemented without requiring either channel state information (CSI) or additional jamming power. At the same time, it reduces Willie's probability of detection and degrades the communication performance between Alice and Bob~\cite{huang2025simultaneously}. The introduction of DRIS faces the following two fundamental challenges of conventional covert communications: (i) TDD channel reciprocity is intentionally aged even within a channel coherence interval, rendering existing threshold designs based on static statistical models inapplicable; and (ii) the distribution of the test statistic under Alice--Bob transmission (i.e., under hypothesis $\mathcal{H}_1$) no longer admits a closed-form expression.

Normalizing flows provide a principled route to likelihood-based density modeling by transforming a simple base distribution into a target distribution through an invertible mapping with a tractable Jacobian determinant~\cite{NF_RN,NF_RealNVP}. As a result, they enable exact likelihood evaluation for complex distributions. In covert communications, the physical metric at Willie used for monitoring is typically receive power~\cite{CCRef3,C_C_RIS1,C_C_RIS4}, and thus the object needed to be modeled by Willie is not a high-dimensional received-signal vector but the scalar test statistic under $\mathcal{H}_1$. Hence, what is required here is not a generic high-dimensional flow architecture, but a flexible one-dimensional monotone normalizing flow that can represent the analytically intractable density of the test statistic while preserving exact change-of-variables likelihood computation~\cite{NF_NSF}. If the density under $\mathcal{H}_1$ were available, Willie could in principle construct the corresponding likelihood-ratio detector even in the presence of DRIS-induced ACA.

Unfortunately, there are still two obstacles due to the introduction of DRISs. First, the distribution of the test statistic under $\mathcal{H}_1$ does not admit a closed-form expression. Second, the warden Willie has no access to  labeled training samples in covert communications, i.e., a scenario that is typically adversarial. Therefore, a likelihood detector based on supervised normalizing flows would rely on side information that Willie does not possess, since each observation would need to be tagged according to whether it was generated during Alice--Bob transmission ($\mathcal{H}_1$) or silence ($\mathcal{H}_0$). These considerations motivate an unsupervised semi-parametric plug-in likelihood-ratio detector developed in this work. 
Specifically, Willie preserves the parametric Gamma reference model under $\mathcal{H}_0$, jointly estimates its effective scale and a flexible normalizing flow-based model for the intractable density under $\mathcal{H}_1$ from unlabeled sensing data, and then forms the resulting plug-in likelihood ratio for detection.

\subsection{Contributions and Organization}
In this work, we develop the unsupervised semi-parametric plug-in likelihood-ratio detector for the warden Willie to monitor covert communications between Alice and Bob in the presence of a DRIS. The main contributions of this paper are summarized as follows.
\begin{itemize}
\item
We formulate a covert communication model in the presence of a DRIS, where the DRIS with random and time-varying reflection coefficients acts like a `disco ball'' and operates autonomously without any connection or coordination with the warden Willie. The introduction of the DRIS not only jams the covert communications between Alice and Bob but also reduces Willie's detection error probability (DEP), i.e., the sum of the false alarm rate (FAR) and the miss detection rate (MDR), without relying on either Alice--Bob channel knowledge or additional jamming power. To quantify the impact of the DRIS on covert communications, we use the FAR and MDR as performance metrics for Willie, and the signal-to-jamming-plus-noise ratio (SJNR) as the communication performance metric at Bob.

\item
For covert communications in the presence of a DRIS, the probability density function (PDF) of the test statistic at Willie is analytically intractable under ${\mathcal{H}}_1$. As a result, the exact Neyman--Pearson (NP) detector cannot be constructed in closed form. Meanwhile, due to the adversarial relationship between Willie and Alice/Bob, it is unrealistic to assume that Willie has access to a labeled training dataset that distinguishes samples collected under ${\mathcal{H}}_1$ and ${\mathcal{H}}_0$, or that he has access to prior knowledge of noise. To overcome these difficulties, we develop an unsupervised semi-parametric plug-in likelihood-ratio detector. The proposed detector retains the parametric Gamma reference model under ${\mathcal H}_0$ without requiring oracle knowledge of the noise power, while learning the intractable density under ${\mathcal H}_1$ from unlabeled data through a flexible one-dimensional monotone normalizing flow model. Based on this construction, Willie can form a plug-in log-likelihood ratio (LLR) for detection without requiring oracle side information.

\item
The DRIS deployed by Willie can also significantly degrade the SJNR at Bob, thereby giving Willie a strategic advantage whenever covert communication occurs, even if Willie misdetects. Based on the derived statistical characteristic of the cascaded DRIS channel, we conduct an asymptotic analysis of the SJNR to quantify the DRIS impact on covert communications between Alice and Bob. 
Simulation results validate the asymptotic analysis and show that the proposed unsupervised semi-parametric plug-in likelihood-ratio detector achieves monitoring performance close to that of its supervised counterpart. The results also reveal several interesting properties of DRIS on covert communications. For example, increasing Alice's transmit power does not significantly improve Bob's SJNR in the presence of the DRIS; instead, it strengthens the DRIS-induced ACA and increases Alice's risk of detection by Willie. In addition, even a 1-bit phase quantization DRIS is sufficient to improve Willie's monitoring accuracy and degrade Alice/Bob's communication performance.
\end{itemize}

The rest of this paper is organized as follows. In Section~\ref{SystemModel}, we model covert communications in the presence of a DRIS and establish wireless channels. We then derive the decision rule for Willie, introduce the DEP as the monitoring performance metric, and use the SJNR to quantify the communication performance metric. Some useful preliminary results on hypothesis testing and random variables are also presented. In Section~\ref{AIDec}, we describe the challenges introduced by the DRIS for Willie's detection, where the PDF of the test statistic under ${\mathcal{H}}_1$ is analytically intractable. We then present the proposed unsupervised semi-parametric plug-in likelihood-ratio detector, which retains the parametric Gamma reference model under ${\mathcal{H}}_0$ without requiring the noise variance, and learns from unlabeled data a one-dimensional monotone normalizing flow model for the analytically intractable density of the test statistic under ${\mathcal{H}}_1$. In addition, we derive the statistical characteristics of DRIS-induced ACA and conduct an asymptotic analysis of the SJNR at Bob to quantify the communication performance between Alice and Bob. In Section~\ref{Simu}, simulation results are presented to validate the theoretical analysis and verify the effectiveness of the proposed unsupervised detector. Finally, conclusions are drawn in Section~\ref{conclusions}.

\emph{Notation:} 
We use lowercase boldface letters for vectors (e.g., ${\boldsymbol g}$) 
and italic letters for scalars (e.g., $N_{\rm D}$). 
The operators $(\cdot)^{T}$ and $(\cdot)^{H}$ denote the transpose 
and the Hermitian transpose, respectively. 
The symbol $|\cdot|$ denotes absolute value, 
and $\mathbb{E}\!\left[\cdot\right]$ denotes statistical expectation.

\section{System Description}\label{SystemModel}
In Section~\ref{CCModel}, we first present covert communications in the presence of a DRIS 
and build the wireless channel models.
In Section~\ref{PerformanceMer}, the decision rule adopted by the warden Willie is first defined.
Then, the DEP, i.e., the sum of FAR and MDR are given as monitoring performance metrics for Willie, while the SJNR is defined as a communication performance metric.
In Section~\ref{RRs}, some useful results on signal detections and random variables are presented.

\subsection{Covert Communications in the Presence of a DRIS}\label{CCModel}
Fig.~\ref{fig1} illustrates a covert communication system in the presence of a DRIS, where Willie attempts to more effectively detect and jam potential covert communications between Alice and Bob by implementing a DRIS. In covert communications, Alice aims to covertly transmit messages to Bob without being detected by the warden Willie while maintaining communication quality~\cite{CCRef4,CCRef5}. In contrast, Willie aims to detect the covert communications between Alice and Bob as accurately as possible. In this work, we consider a covert communication scenario in which Willie has no knowledge of the Alice--Bob channels, including Bob's location. Therefore, Willie attempts to detect potential covert communications without any coordination with Alice or Bob. Moreover, we assume that Willie does not have access to a labeled dataset for training a supervised detection model.

\begin{figure*}[!t]
\centering
\includegraphics[scale=0.66]{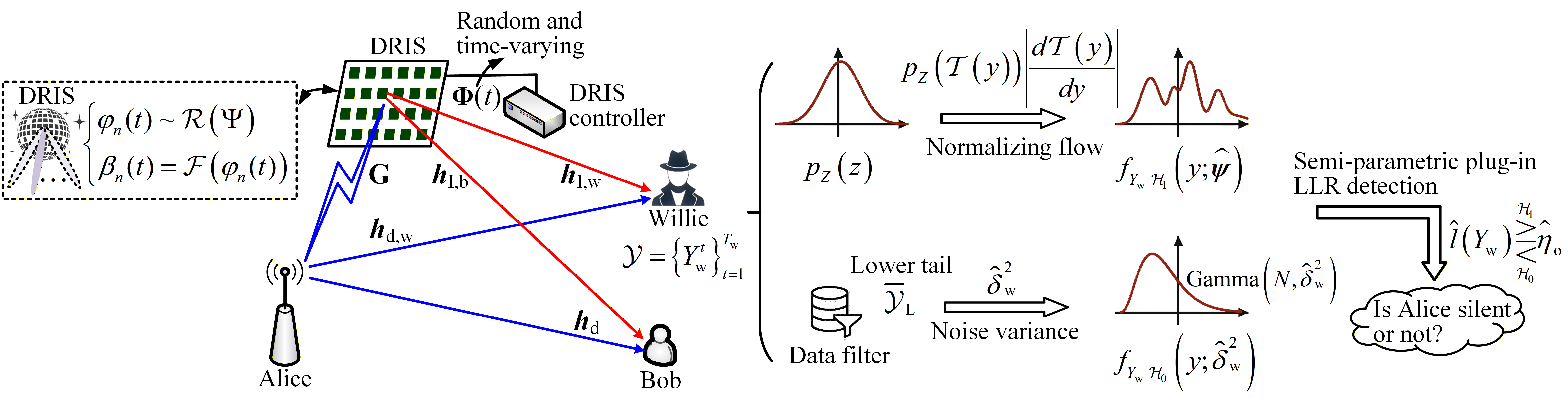}
\caption{Covert communications in the presence of a disco reconfigurable intelligent surface (DRIS), where the warden Willie employs the masked autoregressive flow to assist in his detection and the DRIS with time-varying and random reflection coefficients is generated by a DRIS controller without any connection or coordination with Willie.}
\label{fig1}
\end{figure*}

The warden Willie employs an $N_{\rm D}$-element DRIS with random and time-varying coefficient configurations to improve the detection accuracy of covert communications. Notably, even when Willie fails to detect the covert transmission, the DRIS can still impose significant ACA on the communication between Alice and Bob. The resulting DRIS-induced ACA is analyzed in subsequent sections. The DRIS coefficients are controlled by programmable PINs~\cite{CRIS_coding}, whose ON/OFF behavior only permits discrete coefficient adjustment. More specifically, we assume that the DRIS employs $b$-bit quantized phase shifts and amplitudes. The phase-shift and amplitude sets are denoted by $\Omega=\left\{\phi_1,\phi_2,\cdots,\phi_{2^b}\right\}$ and $\Lambda=\left\{\alpha_1,\alpha_2,\cdots,\alpha_{2^b}\right\}$,
respectively. Accordingly, the random and time-varying DRIS reflection vector is expressed as
${\boldsymbol \varphi}(t)=\left[\beta_1(t)e^{j\varphi_1(t)},\beta_2(t)e^{j\varphi_2(t)},\cdots,\beta_{N_{\rm D}}(t)e^{j\varphi_{N_{\rm D}}(t)}\right]$, where the phase shift of the $r$-th DRIS element is randomly generated according to a distribution $\mathcal{R}$ over $\Omega$, i.e., $\varphi_r(t)\sim\mathcal{R}(\Omega)$. In practice, the amplitude $\beta_r(t)$, $r=1,2,\cdots,N_{\rm D}$, typically depends on the corresponding phase shift and can be modeled as $\beta_r(t)=\mathcal{F}(\varphi_r(t))$, where $\mathcal{F}$ is a deterministic mapping specified by the hardware characteristics.

In covert communications, Willie attempts to determine from his observations which of the following two hypotheses holds: Alice is transmitting (${\mathcal H}_1$), or Alice is silent (${\mathcal H}_0$). Referring to the covert communication model in~\cite{C_C_RIS1}, the $m$-th received signal at Willie can be expressed as
\begin{equation}
    y_{\rm w}(m)=
    \left\{
    \begin{array}{ll}
        \underbrace{h^{\rm w}_{\rm d}s(m)}_{\rm Direct\;link}
        +\underbrace{h^{\rm w}_{\rm D}(m)s(m)}_{\rm DRIS\mbox{-}based\;link}
        +n_{\rm w}(m), & {\mathcal H}_1,\\[0.5ex]
        n_{\rm w}(m), & {\mathcal H}_0,
    \end{array}
    \right.
    \label{CCObSigWS}
\end{equation}
where $s(m)$ denotes the $m$-th covert symbol transmitted by Alice within a channel coherence interval, and $n_{\rm w}(m)$ is additive white Gaussian noise (AWGN) with zero mean and variance $\delta_{\rm w}^2$, i.e., $n_{\rm w}(m)\sim\mathcal{CN}(0,\delta_{\rm w}^2)$. Similar to~\cite{CCRef2,CCRef3,CCRef4}, the covert symbols transmitted by Alice are modeled as independent and identically distributed (i.i.d.) complex Gaussian random variables with zero mean and variance $P_0$, i.e., $s(m)\sim\mathcal{CN}(0,P_0)$, where $P_0$ denotes the transmit power of the covert symbols.

In~\eqref{CCObSigWS}, $h^{\rm w}_{\rm d}=\widehat h^{\rm w}_{\rm d}/\mathcal{L}^{\nu^{\rm w}_{\rm d}/2}$ denotes the direct channel between Alice and Willie, where $\widehat h^{\rm w}_{\rm d}$ denotes the small-scale fading coefficient and $\mathcal{L}^{\nu^{\rm w}_{\rm d}/2}$ denotes the large-scale fading coefficient. Moreover, $h^{\rm w}_{\rm D}(m)$ denotes the cascaded DRIS-based channel between Alice and Willie at the $m$-th sampling instant. Specifically, $h^{\rm w}_{\rm D}(m)$ can be expressed as
\begin{align}
    h^{\rm w}_{\rm D}(m)
    &= {\boldsymbol g}\,{\rm diag}\!\big({\boldsymbol \varphi}(m)\big)\,{\boldsymbol h}^{\rm w}_{\rm I}
    \label{hDEx}\\
    &= \frac{\widehat{\boldsymbol g}}{\mathcal{L}^{\nu_{\rm g}/2}}
       {\rm diag}\!\big({\boldsymbol \varphi}(m)\big)
       \frac{\widehat{\boldsymbol h}^{\rm w}_{\rm I}}{\mathcal{L}^{\nu^{\rm w}_{\rm I}/2}},
    \label{hDEx1}
\end{align}
where $\mathcal{L}^{\nu_{\rm g}/2}$ and $\mathcal{L}^{\nu^{\rm w}_{\rm I}/2}$ are the large-scale fading coefficients of the Alice--DRIS and DRIS--Willie channels, respectively. Furthermore, the entries of $\widehat{\boldsymbol h}^{\rm w}_{\rm I}$ are assumed to be i.i.d. complex Gaussian random variables, i.e., $\widehat{\boldsymbol h}^{\rm w}_{\rm I}\sim\mathcal{CN}({\bf 0},{\bf I}_{N_{\rm D}})$, where ${\bf I}_{N_{\rm D}}$ is the $N_{\rm D}\times N_{\rm D}$ identity matrix.

In this work, we assume that the DRIS is deployed close to Alice to maximize its impact~\cite{IREDep}. Furthermore, the DRIS is typically equipped with a large number of reflecting elements to compensate for the severe multiplicative large-scale fading of the cascaded DRIS-based channel~\cite{MyWCMag,DIRSTWC,TWCAnti}. Consequently, the Alice--DRIS channel $\widehat{\boldsymbol g}$ is modeled using a near-field Rician model~\cite{NearfieldMoRef,NearfieldMo2}, i.e.,
\begin{equation}
    \widehat{\boldsymbol g}
    =
    \sqrt{\frac{\kappa_{\rm g}}{1+\kappa_{\rm g}}}\,\widehat{\boldsymbol g}^{\rm LOS}
    +
    \sqrt{\frac{1}{1+\kappa_{\rm g}}}\,\widehat{\boldsymbol g}^{\rm NLOS},
    \label{Ricianchan}
\end{equation}
where $\kappa_{\rm g}$ denotes the Rician factor of the Alice--DRIS channel. The entries of the non-line-of-sight (NLOS) component $\widehat{\boldsymbol g}^{\rm NLOS}$ are assumed to be i.i.d. complex Gaussian random variables, i.e., $\widehat{\boldsymbol g}^{\rm NLOS}\sim\mathcal{CN}({\bf 0},{\bf I}_{N_{\rm D}})$. The entries of the line-of-sight (LOS) component $\widehat{\boldsymbol g}^{\rm LOS}$ are given by
\begin{equation}
    \left[\widehat{\boldsymbol g}^{\rm LOS}\right]_r
    =
    e^{-j\frac{2\pi}{\lambda}(d_r-d_0)},
    \label{GLOS}
\end{equation}
where $r=1,\cdots,N_{\rm D}$, $\lambda$ is the wavelength of the transmitted covert signal, $d_r$ denotes the distance between Alice's antenna and the $r$-th DRIS element, and $d_0$ denotes the distance between Alice's antenna and the center of the DRIS, respectively.

In our considered covert communication scenario, Bob is assumed to know whether Alice is transmitting through synchronization and a pre-shared protocol/schedule. Accordingly, when Alice is silent, Bob does not attempt to receive any signal. When Alice transmits $M$ covert symbols, the received signal at Bob can be expressed as
\begin{align}
    y_{\rm b}(m)
    &= \underbrace{h^{\rm b}_{\rm d}s(m)}_{\rm Covert\;signal}
     + \underbrace{h^{\rm b}_{\rm D}(m)s(m)}_{\rm DRIS\;jamming}
     + \underbrace{n_{\rm b}(m)}_{\rm Noise}
    \label{CCObSigBS0}\\
    &= \frac{\widehat h^{\rm b}_{\rm d}s(m)}{\mathcal{L}^{\nu^{\rm b}_{\rm d}/2}}
     + \frac{\widehat{\boldsymbol g}\,{\rm diag}\!\big({\boldsymbol \varphi}(m)\big)\widehat{\boldsymbol h}^{\rm b}_{\rm I}s(m)}
            {\mathcal{L}^{\nu_{\rm g}/2}\mathcal{L}^{\nu^{\rm b}_{\rm I}/2}}
     + n_{\rm b}(m),
    \label{CCObSigBS}
\end{align}
where $m=1,2,\cdots,M$, $\mathcal{L}^{\nu^{\rm b}_{\rm d}/2}$ and $\mathcal{L}^{\nu^{\rm b}_{\rm I}/2}$ denote the large-scale fading coefficients of the Alice--Bob and DRIS--Bob channels, respectively, and $n_{\rm b}(m)$ is AWGN with zero mean and variance $\delta_{\rm b}^2$. Moreover, $\widehat h^{\rm b}_{\rm d}\sim\mathcal{CN}(0,1)$ and $\widehat{\boldsymbol h}^{\rm b}_{\rm I}\sim\mathcal{CN}({\bf 0},{\bf I}_{N_{\rm D}})$, respectively.

\subsection{Detection and Communication Performance Metrics}\label{PerformanceMer}
To evaluate Willie's monitoring performance in covert communications in the presence of the DRIS, we use the  MDR and FAR as monitoring performance metrics~\cite{CCRef3,CCRelay11,CCRelay2,C_C_RIS1,C_C_RIS4}. To evaluate the communication performance between Alice and Bob, we use the SJNR as the communication performance metric~\cite{TWCAnti}.

Based on the received samples in~\eqref{CCObSigWS}, the detection problem at Willie is formulated as a binary hypothesis test based on the test statistic
\begin{equation}
    Y_{\rm w}
    =
    \sum_{n=1}^{N}\left|y_{\rm w}(n)\right|^2,
\end{equation}
where $N$ denotes the number of samples used for detection, with $N\le M$.  Rather than performing direct threshold detection on $Y_{\rm w}$ itself, Willie makes the decision rule according to the LLR, i.e.,
\begin{equation}
    \ell\!\left(Y_{\rm w}\right)
    =
    \log f_{Y_{\rm w}\mid{\mathcal H}_1}\!\left(y\right)
    -
    \log f_{Y_{\rm w}\mid{\mathcal H}_0}\!\left(y\right).
    \label{logLLR_general}
\end{equation}
Accordingly, the decision rule at the warden Willie can be written as
\begin{equation}
    \ell\!\left(y\right)
    \mathop{\gtreqless}\limits_{{\mathcal H}_0}^{{\mathcal H}_1}
    \eta_{\rm o},
    \label{decisionrule}
\end{equation}
where $\eta_{\rm o}$ denotes the detection threshold.

Consequently, the MDR, denoted by $p_{\rm M}$, is defined as
\begin{equation}
    p_{\rm M}
    =
    \mathbb{P}\!\left(
    \left.
    \ell\!\left(y\right) < \eta_{\rm o}
    \right| {\mathcal H}_1
    \right),
    \label{PMDR}
\end{equation}
which is the probability that Willie decides ${\mathcal H}_0$ when ${\mathcal H}_1$ occurs 
and the probability that ${\mathcal H}_0$ occurs satisfies $\mathbb{P}\!\left(\left.  \cdot \right| {\mathcal H}_0\right) = \pi_1$. Similarly, the FAR, denoted by $p_{\rm F}$, is defined as
\begin{equation}
    p_{\rm F}
    =
    \mathbb{P}\!\left(
    \left.
    \ell\!\left(y\right) \ge \eta_{\rm o}
    \right| {\mathcal H}_0
    \right),
    \label{PFAR}
\end{equation}
which is the probability that Willie decides ${\mathcal H}_1$ when ${\mathcal H}_0$ occurs, and 
the probability that ${\mathcal H}_0$ occurs satisfies $\mathbb{P}\!\left(\left.  \cdot \right| {\mathcal H}_1 \right) = \pi_0=1- \pi_1$.

On the other hand, the ergodic SJNR at Bob is defined based on~\eqref{CCObSigBS} as
\begin{equation}
    \gamma_{\rm b}
    =
    \frac{
    \mathbb{E}\!\left[
    \left|
    \frac{\widehat h^{\rm b}_{\rm d}s(m)}{\mathcal{L}^{\nu^{\rm b}_{\rm d}/2}}
    \right|^2
    \right]
    }{
    \mathbb{E}\!\left[
    \left|
    \frac{\widehat{\boldsymbol g}\,{\rm diag}\!\big({\boldsymbol \varphi}(m)\big)\widehat{\boldsymbol h}^{\rm b}_{\rm I}s(m)}
    {\mathcal{L}^{\nu_{\rm g}/2}\mathcal{L}^{\nu^{\rm b}_{\rm I}/2}}
    \right|^2
    \right]
    +\delta_{\rm b}^2
    }.
    \label{SNRBob}
\end{equation}
Consequently, the achievable rate at Bob can be calculated as $R = \log_2(1+\gamma_{\rm b})$.

It follows from~\eqref{PMDR} to~\eqref{SNRBob} that the introduction of the DRIS affects both Willie's monitoring performance and Bob's ergodic SJNR. In Section~\ref{AIDec}, we further develop an unsupervised semi-parametric plug-in likelihood-ratio detector for Willie, in which a flexible normalizing flow-based model is used to learn the intractable density of $Y_{\rm w}$ under ${\mathcal H}_1$, and then quantify the impact of DRIS-induced ACA on communications between Alice and Bob.

\subsection{Preliminary: Review of Some Related Results}\label{RRs}
\newtheorem{remark}{Remark}
\begin{remark}
\label{Lemma1}
(Bayesian likelihood-ratio rule)
Let $X$ be an observation whose class-conditional densities under
${\mathcal H}_0$ and ${\mathcal H}_1$ are $f_0(x)$ and $f_1(x)$, respectively.
Let the prior probabilities of ${\mathcal H}_0$ and ${\mathcal H}_1$ be
$\pi_0$ and $\pi_1$, where $\pi_0,\pi_1>0$ and $\pi_0+\pi_1=1$.
Moreover, let $C_{ji}$ denote the decision cost incurred by deciding
${\mathcal H}_j$ when ${\mathcal H}_i$ is true, for $i,j\in\{0,1\}$.

Then, the Bayesian decision rule that minimizes the posterior expected cost
decides ${\mathcal H}_1$ if
\begin{equation}
\Lambda(x)=\frac{f_1(x)}{f_0(x)}
\mathop{\gtreqless}\limits_{{\mathcal H}_0}^{{\mathcal H}_1}
\eta_{\rm B},
\label{BayesDetector}
\end{equation}
where
\begin{equation}
\eta_{\rm B}
=
\frac{\pi_0\big(C_{10}-C_{00}\big)}
     {\pi_1\big(C_{01}-C_{11}\big)},
\label{BayesThreshold}
\end{equation}
provided that $C_{10}>C_{00}$ and $C_{01}>C_{11}$.
Equivalently, in LLR form,
\begin{equation}
\ell(x)=\log f_1(x)-\log f_0(x)
\mathop{\gtreqless}\limits_{{\mathcal H}_0}^{{\mathcal H}_1}
\log \eta_{\rm B}.
\label{BayesDetectorLog}
\end{equation}
When $\Lambda(x)=\eta_{\rm B}$, randomization may be used if necessary.

As a special case, if correct decisions incur zero cost and the two error
types have equal cost, i.e., $C_{00}=C_{11}=0$ and $C_{10}=C_{01}$, then
\begin{equation}
\eta_{\rm B}=\frac{\pi_0}{\pi_1}.
\label{BayesThresholdSpecial1}
\end{equation}
If, in addition, $\pi_0=\pi_1 = \frac{1}{2}$, then $\eta_{\rm B}=1$, and the rule reduces
to the zero-threshold test in the log domain.
\end{remark}

\begin{remark}
\label{Theorem1}
(Lindeberg-L$\acute{e}$vy Central Limit Theorem)
Suppose ${\boldsymbol x}\buildrel\Delta\over=[x_1,x_2,\cdots,x_n]$ is a vector of i.i.d. random variables with mean $\mathbb E[x_i]=\mu<\infty$ and variance ${\rm Var}(x_i)=\nu^2<\infty$ for all $i=1,\cdots,n$. Then, according to the Lindeberg-L$\acute{e}$vy central limit theorem,
\begin{equation}
\sqrt{n}\left(\overline X-\mu\right)
=
\frac{\sum_{i=1}^{n}x_i}{\sqrt n}-\sqrt n\,\mu
\mathop{\to}\limits^{\rm d}
\mathcal{N}(0,\nu^2),
\qquad n\to\infty,
\label{CLTeq}
\end{equation}
where $\overline X=\frac{\sum_{i=1}^{n}x_i}{n}$.
\end{remark}

\section{Simultaneously Exposing and Jamming Covert Communications}\label{AIDec}
In Section~\ref{WillieDec}, we first describe the challenges that the DRIS introduces for Willie's detection, under which the PDF of the test statistic at Willie under the transmission hypothesis ${\mathcal{H}}_1$ becomes analytically intractable. 
In Section~\ref{WillieDecAI}, we then develop an unsupervised semi-parametric plug-in likelihood-ratio detector, which retains the parametric Gamma reference model under ${\mathcal H}_0$ without requiring prior knowledge of the noise power, and learns from unlabeled data a one-dimensional monotone normalizing flow model for the analytically intractable density under ${\mathcal H}_1$. 
In Section~\ref{Bobcommun}, we derive the statistical characteristics of the cascaded DRIS channel between Alice and Bob, and then conduct an asymptotic analysis of the SJNR at Bob to quantify the communication performance. Some interesting properties of DRIS on covert communications are also derived based on the asymptotic analysis.

\subsection{Impact of DRIS on Willie's Detection}\label{WillieDec}

Recalling Willie's decision rule in~\eqref{decisionrule}, this is a binary hypothesis testing problem. 
When Alice and Bob are silent, i.e., under the silence hypothesis ${\mathcal H}_0$, the observation $Y_{\rm w}$ reduces to
\begin{equation}
    Y_{\rm w}=\sum_{n=1}^{N}\left|n_{\rm w}(n)\right|^2.
    \label{ObserveH0}
\end{equation}

Since $\{n_{\rm w}(n)\}_{n=1}^{N}$ are AWGN samples, $Y_{\rm w}$ follows a Gamma distribution with shape parameter $N$ and scale parameter $\delta_{\rm w}^2$, i.e.,
\begin{equation}
    Y_{\rm w}\sim {\rm Gamma}\!\left(N,\delta_{\rm w}^2\right).
\end{equation}
Accordingly, the PDF of $Y_{\rm w}$ under ${\mathcal H}_0$ is
\begin{equation}
    f_{Y_{\rm w}\mid {\mathcal H}_0}(y)
    =
    \frac{1}{\Gamma(N)\left(\delta_{\rm w}^2\right)^N}y^{N-1}e^{-y/\delta_{\rm w}^2},
    \qquad y\ge 0,
    \label{ObserveH0PDF}
\end{equation}
where $\Gamma(\cdot)$ denotes the Gamma function. Since $N$ is a positive integer, $\Gamma(N)=(N-1)!$.

If the PDF of $Y_{\rm w}$ under ${\mathcal H}_1$ were available, the exact NP detector could be constructed accordingly. Before discussing this issue, we first derive the asymptotic distribution of the cascaded DRIS-based channel $h_{\rm D}^{\rm w}(t)$ in~\eqref{hDEx}, as stated in Proposition~\ref{Proposition1}.

\newtheorem{proposition}{Proposition}
\begin{proposition}
\label{Proposition1}
The random and time-varying DRIS-based term $h_{\rm D}^{\rm w}(t)$ converges in distribution to a zero-mean complex Gaussian random variable as $N_{\rm D}\to\infty$, i.e.,
\begin{equation}
h_{\rm D}^{\rm w}(t)
=
\frac{\widehat{\boldsymbol g}\,
{\rm diag}\!\big(\boldsymbol \varphi(t)\big)\,
\widehat{\boldsymbol h}_{\rm I}^{\rm w}}
{\mathcal{L}^{\nu_{\rm g}/2}\mathcal{L}^{\nu_{\rm I}^{\rm w}/2}}
\mathop{\longrightarrow}\limits^{\rm d}
\mathcal{CN}\!\left(
0,\,
\frac{N_{\rm D}\overline{\alpha}}
{\mathcal{L}^{\nu_{\rm g}}\mathcal{L}^{\nu_{\rm I}^{\rm w}}}
\right).
\label{HDSta}
\end{equation}
In~\eqref{HDSta}, $\overline{\alpha}=\sum_{i=1}^{2^b} p_i \alpha_i^2$
where $p_i$ denotes the probability that the phase shift of the $r$-th DRIS element takes the $i$-th value in $\Omega$, i.e., $p_i=\mathbb{P}\!\left(\varphi_r(t)=\phi_i\right), \forall r$.
\end{proposition}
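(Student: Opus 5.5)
Write the cascaded term as a sum over DRIS elements,
\begin{equation*}
h_{\rm D}^{\rm w}(t)=\frac{1}{\mathcal{L}^{\nu_{\rm g}/2}\mathcal{L}^{\nu_{\rm I}^{\rm w}/2}}\sum_{r=1}^{N_{\rm D}} z_r(t),\qquad
z_r(t)=[\widehat{\boldsymbol g}]_r\,\beta_r(t)e^{j\varphi_r(t)}\,[\widehat{\boldsymbol h}_{\rm I}^{\rm w}]_r ,
\end{equation*}
and apply the Lindeberg--L\'evy central limit theorem of Remark~\ref{Theorem1} to the real and imaginary parts of $\sum_r z_r(t)$ jointly, i.e., in its bivariate form via the Cram\'er--Wold device. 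The plan has three steps. First, show that the summands $z_r(t)$ are mutually independent. Second, compute their first and second moments, and the pseudo-variance $\mathbb{E}[z_r^2]$. Third, invoke the CLT and rescale.

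\textbf{Moments.} The factors $[\widehat{\boldsymbol g}]_r$, $\varphi_r(t)$ and $[\widehat{\boldsymbol h}_{\rm I}^{\rm w}]_r$ are mutually independent, and across $r$ they are drawn independently. Since $[\widehat{\boldsymbol h}_{\rm I}^{\rm w}]_r\sim\mathcal{CN}(0,1)$ is circularly symmetric and independent of everything else:
\begin{itemize}
\item $\mathbb{E}[z_r]=0$;
\item $\mathbb{E}[z_r^2]=0$, since $\mathbb{E}[h^2]=0$ for $h\sim\mathcal{CN}(0,1)$;
\item $\mathbb{E}|z_r|^2=\mathbb{E}|[\widehat{\boldsymbol g}]_r|^2\,\mathbb{E}[\beta_r^2]\,\mathbb{E}|[\widehat{\boldsymbol h}_{\rm I}^{\rm w}]_r|^2$.
\end{itemize}
Using the Rician model~\eqref{Ricianchan} with the unit-modulus LOS entries~\eqref{GLOS},
\begin{equation*}
\mathbb{E}|[\widehat{\boldsymbol g}]_r|^2=\frac{\kappa_{\rm g}}{1+\kappa_{\rm g}}+\frac{1}{1+\kappa_{\rm g}}=1 .
\end{equation*}
Since $\beta_r=\mathcal{F}(\varphi_r)$ equals $\alpha_i$ with probability $p_i$, we get $\mathbb{E}[\beta_r^2]=\overline{\alpha}$. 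Hence each $z_r$ is zero-mean and proper, with variance $\overline{\alpha}$ split equally as $\overline{\alpha}/2$ on its real and imaginary parts, with zero cross-covariance.

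\textbf{Main obstacle: identical distribution.} The $z_r$ are independent but not identically distributed, because the LOS phase $e^{-j2\pi(d_r-d_0)/\lambda}$ depends on $r$. Remark~\ref{Theorem1} requires i.i.d.\ summands, so I would handle this in one of two ways.
\begin{itemize}
\item \emph{Absorb the phase.} Multiplying by the independent circular $[\widehat{\boldsymbol h}_{\rm I}^{\rm w}]_r$ means a deterministic phase rotation of $[\widehat{\boldsymbol g}]_r$ does not change the law of $z_r$. Concretely, $e^{j\theta_r}[\widehat{\boldsymbol h}_{\rm I}^{\rm w}]_r \overset{d}{=}[\widehat{\boldsymbol h}_{\rm I}^{\rm w}]_r$, and the NLOS part is also circular. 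So $z_r \overset{d}{=} \big(\sqrt{\kappa_{\rm g}/(1+\kappa_{\rm g})}+\sqrt{1/(1+\kappa_{\rm g})}\,w_r\big)\beta_r e^{j\varphi_r}h_r$ with the same law for every $r$. This makes the summands genuinely i.i.d.
\item \emph{Fallback.} Invoke the Lindeberg (or Lyapunov) CLT directly. Uniformly bounded fourth moments of $z_r$ hold because $\beta_r$ is bounded and the Gaussian factors have finite moments, and these suffice.
\end{itemize}

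\textbf{Conclusion.} With i.i.d.\ zero-mean proper summands of variance $\overline{\alpha}$, Remark~\ref{Theorem1} gives
\begin{equation*}
\frac{1}{\sqrt{N_{\rm D}}}\sum_{r=1}^{N_{\rm D}} z_r(t)\ \mathop{\longrightarrow}\limits^{\rm d}\ \mathcal{CN}(0,\overline{\alpha}) .
\end{equation*}
Multiplying by $\sqrt{N_{\rm D}}/(\mathcal{L}^{\nu_{\rm g}/2}\mathcal{L}^{\nu_{\rm I}^{\rm w}/2})$ yields the approximating law $\mathcal{CN}\big(0,N_{\rm D}\overline{\alpha}/(\mathcal{L}^{\nu_{\rm g}}\mathcal{L}^{\nu_{\rm I}^{\rm w}})\big)$. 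This is read, as is standard, in the sense of a large-$N_{\rm D}$ Gaussian approximation, since the variance grows with $N_{\rm D}$; this is exactly~\eqref{HDSta}.
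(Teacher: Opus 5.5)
Your proposal is correct and follows the same route as Appendix~\ref{AppendixA}: show that each per-element summand has zero mean and variance $\overline{\alpha}$, then apply the Lindeberg--L\'evy central limit theorem over the $N_{\rm D}$ elements. Your version is more careful than the paper's on four points: the paper computes the LOS and NLOS sums separately even though they share $[\widehat{\boldsymbol h}^{\rm w}_{\rm I}]_r$ and $\varphi_r(t)$; it never checks the pseudo-variance; it ignores that the $r$-dependent LOS phases break the i.i.d.\ hypothesis; and it leaves implicit that the $N_{\rm D}$-dependent limit means $\frac{1}{\sqrt{N_{\rm D}}}\sum_r z_r(t)\to\mathcal{CN}(0,\overline{\alpha})$ --- your single-summand treatment, phase-absorption argument, and bivariate CLT handle each of these correctly.
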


\begin{IEEEproof}
    See Appendix~\ref{AppendixA}.
\end{IEEEproof}

Although the DRIS coefficients are time-varying, throughout this work they are generated according to a time-invariant random law. Specifically, each DRIS element takes values from $\Omega$ with fixed probabilities $p_i$ i.e., the distribution of $\varphi_r(t)$ does not drift with $t$. Under this assumption, the test statistic received at Willie, i.e., $Y_{\rm w}$ admits a time-invariant marginal distribution under ${\mathcal H}_1$. Accordingly, $f_{Y_{\rm w}\mid {\mathcal H}_1}(y)$ can be treated as stationary over the considered operating interval, which justifies learning a single time-invariant density model from samples collected across sensing intervals.

Unfortunately, according to Proposition~\ref{Proposition1}, the deployment of the DRIS prevents Willie from deriving an explicit expression for the PDF of $Y_{\rm w}$ under ${\mathcal{H}}_1$, i.e., $f_{Y_{\rm w}|\mathcal H_1}(y)$. It is worth noting that the DRIS jamming term in~\eqref{CCObSigWS}, i.e., $Z_{\rm w} = h_{\rm D}^{\rm w}(m)s(m)$, can be viewed as a new random variable arising from the product of two independent complex Gaussian random variables. However, the product of two independent complex Gaussian random variables is not itself Gaussian~\cite{CNN}.
Given two independent complex random variables $X$ and $Y$ with PDFs
$f_X(x)$ and $f_Y(y)$, respectively, the PDF of $Z=XY$ can be computed as
\begin{equation}
    {f_Z}\!\left( z \right) = \int_{ \mathbb{C} }  {{f_X}\left( x \right)} {f_Y}\!\left( {\frac{z}{x}} \right)\frac{1}{{\left| x \right|^2}}d^2x.
    \label{XYPDF}
\end{equation}
Consequently, the PDF of $Z_{\rm w}$ can be calculated as
\begin{equation}
\begin{split}
f_{Z_{\rm w}}\!\left( z \right) &= \frac{1}{\pi^2 P_0 \frac{ N_{\rm D}\overline \alpha}{ \mathcal{L}^{\nu_{{\rm{g}}}} \mathcal{L}^{\nu^{\rm w}_{{\rm{I}}}} }} 
\\
&\quad \times \int_{ \mathbb{C} }  \exp\!\!\left\{ -\frac{\left|x\right|^2}{ \frac{ N_{\rm D}\overline \alpha}{ \mathcal{L}^{\nu_{{\rm{g}}}} \mathcal{L}^{\nu^{\rm w}_{{\rm{I}}}} } } \right\} 
\exp\!\!\left\{ -\frac{ \left| \frac{z}{x} \right|^2 }{ P_0 } \right\}
\frac{1}{ \left| x \right|^2 } d^2x.
\label{SpeXYPDF}
\end{split}
\end{equation}

Moreover, in~\eqref{CCObSigWS} under ${\mathcal{H}}_1$, the sum of the direct-link covert term and the noise can be treated as a new complex Gaussian random variable, i.e., $N_{\rm w} = {{ {{h^{\rm{w}}_{{\rm{d}}}}s(m)} }} + {n_{{\rm w}}(m)} \sim {\mathcal{CN}}\!\left(0, {\left|h_{\rm d}^{\rm w}\right|^2 P_0 + \delta^2_{\rm w}}\right)$.
Note that, given two independent complex random variables $X$ and $Y$ with PDFs
$f_X(x)$ and $f_Y(y)$, respectively, the PDF of $Z=X+Y$ can be calculated as
\begin{equation}
    {f_Z}\!\left( z \right) = \int_{ \mathbb{C} }  {{f_X}\!\left( x \right)} {f_Y}\!\left( z-x \right) d^2x.
    \label{XplusYPDF}
\end{equation}
Theoretically, the $m$-th received signal at Willie under ${\mathcal{H}}_1$, i.e., $y_{\rm w}(m) = Z_{\rm w} + N_{\rm w}$, can be regarded as a random variable whose PDF can be calculated using~\eqref{XplusYPDF}. Unfortunately, based on~\eqref{SpeXYPDF} and~\eqref{XplusYPDF}, it is difficult to derive an explicit expression for the PDF of $y_{\rm w}(m)$ because of the mathematical intractability introduced by the product and sum of complex random variables. As a result, it is also difficult to derive a closed-form expression for the PDF of $Y_{\rm w}$ under ${\mathcal{H}}_1$. Therefore, in the covert communication system in the presence of a DRIS, the exact NP detector, and hence the corresponding decision threshold $\eta_{\rm o}$, cannot be obtained in closed form.

\subsection{Unsupervised Semi-parametric Plug-in Likelihood-Ratio Detection}\label{WillieDecAI}
As discussed in Section~\ref{WillieDec}, the difficulty of detection at Willie is twofold. Under the silence hypothesis ${\mathcal H}_0$, although the PDF of the test statistic belongs to a Gamma family as implied by~\eqref{ObserveH0PDF}, Willie does not have access to the noise variance $\delta^{2}_{\rm w}$, and hence the corresponding scale parameter is unknown a priori. Under the transmission hypothesis ${\mathcal H}_1$, the distribution of the test statistic $Y_{\rm w}$ does not admit a tractable closed-form expression. Moreover, Willie is assumed to have no access to  labeled training samples. Motivated by these facts, we construct an unsupervised semi-parametric plug-in likelihood-ratio detector for Willie to monitor whether Alice and Bob are transmitting. The detector is semi-parametric in the sense that it retains a parametric Gamma reference model under ${\mathcal H}_0$ while jointly estimating its effective scale from unlabeled data, and models the intractable density under ${\mathcal H}_1$ using a flexible one-dimensional monotone normalizing flow model.

Let ${\mathcal Y}=\{Y^t_{\rm w}\}_{t=1}^{T_{\rm w}}$ denote a collection of unlabeled realizations of the test statistic $Y_{\rm w}$. Since the DRIS coefficients are generated according to a time-invariant random law, the distribution of $Y_{\rm w}$ under each hypothesis is stationary over the considered operating interval. Therefore, the samples in ${\mathcal Y}$ can be treated as i.i.d. draws from a common mixture distribution. Our objective is to recover, from ${\mathcal Y}$ alone, both the density under ${\mathcal H}_1$ and the effective scale parameter of the Gamma reference density under ${\mathcal H}_0$. To this end, we model the unlabeled observations by
\begin{equation}
f_{Y_{\rm w}}(y)
=
\pi_0 f_{Y_{\rm w}\mid{\mathcal H}_0}(y;\delta_{\rm w}^2)
+
\pi_1 f_{Y_{\rm w}\mid{\mathcal H}_1}(y;\boldsymbol{\psi}),
\label{SemiMix}
\end{equation}
where $y\ge 0$, $\pi_0,\pi_1\in(0,1)$ satisfy $\pi_0+\pi_1=1$, $\delta_{\rm w}^2>0$ denotes the scale parameter of the Gamma reference density under ${\mathcal H}_0$, and $\boldsymbol{\psi}$ collects the parameters of the flow model under ${\mathcal H}_1$.

The null-hypothesis component (i.e., the silence hypothesis ${\mathcal H}_0$) in~\eqref{SemiMix} retains the Gamma form implied by~\eqref{ObserveH0PDF}, but its scale parameter $\delta^2_{\rm w}$ is not assumed known a priori and is jointly estimated from unlabeled data. By contrast, the density under the transmission hypothesis ${\mathcal H}_1$ is learned from data through a one-dimensional monotone normalizing flow model. 
Specifically, let ${\mathcal T}(\cdot;\boldsymbol{\psi})$ be a differentiable and strictly monotone transformation that maps $y$ to a latent variable $z$ with a simple reference density $p_Z(z)$, i.e.,
\begin{equation}
z={\mathcal T}(y;\boldsymbol{\psi}) \sim p_Z(z),
\label{ScalarFlow}
\end{equation}
where $p_Z(z)$ is chosen as the standard Gaussian density. By the change-of-variables formula, the resulting density under ${\mathcal H}_1$ is
\begin{equation}
f_{Y_{\rm w}\mid{\mathcal H}_1}(y;\boldsymbol{\psi})
=
p_Z\!\big({\mathcal T}(y;\boldsymbol{\psi})\big)
\left|
\frac{d{\mathcal T}(y;\boldsymbol{\psi})}{dy}
\right|.
\label{FlowDensity}
\end{equation}

Based on~\eqref{ScalarFlow} and \eqref{FlowDensity}, the unlabeled likelihood detector design can be stated as a constrained semi-parametric estimation problem as follows:
\begin{subequations}
\begin{align}
\max_{\delta_{\rm w}^2,\boldsymbol{\psi},\{r_t\}}
&
\sum_{t=1}^{T_{\rm w}}\! 
\Big[\! 
(1\!-\!r_t)\log\! f_{Y_{\rm w}\mid{\mathcal H}_0}(y_t;\delta_{\rm w}^2)
\!+\!
r_t \log \!f_{Y_{\rm w}\mid{\mathcal H}_1}(y_t;\boldsymbol{\psi})\! 
\Big]
\label{SemiObjective}\\
\text{s.t.}\quad
&
0\le r_t \le 1, \nonumber\\
&
\delta_{\rm w}^2>0, \nonumber\\
&
\frac{1}{T_{\rm w}}\sum_{t=1}^{T_{\rm w}} r_t = \pi_1,
\label{SemiConstraint}
\end{align}
\end{subequations}
where $r_t$ is the posterior responsibility associated with ${\mathcal H}_1$ for the sample $y_t$ and $t=1,\ldots,T_{\rm w}$. In other words, $r_t$ acts as a soft hypothesis label inferred directly from the unlabeled data.

To initialize the estimation of the Gamma scale parameter $\delta_{\rm w}^2$, we exploit the structural prior that the lower tail of ${\mathcal Y}$ is dominated by samples that are more likely to arise from ${\mathcal H}_0$. Let ${\mathcal Y}_{\rm L}\subset{\mathcal Y}$ denote the subset formed by the smallest $q$ fraction of the observations, where $q\in(0,1)$ is a design parameter. Denote by $\overline y_{\rm L}$ the sample mean over ${\mathcal Y}_{\rm L}$. Matching this empirical lower-tail mean to the corresponding truncated mean of a unit-scale Gamma random variable yields an anchor estimate
\begin{equation}
\widehat{\delta}_{{\rm w},{\rm anc}}^{2}
=
\frac{\overline y_{\rm L}}{m_N(q)},
\label{AnchorTheta}
\end{equation}
where $m_N(q)$ denotes the mean of a ${\rm Gamma}(N,1)$ random variable conditioned on belonging to its lower $q$-quantile. The quantity $\widehat{\delta}_{{\rm w},{\rm anc}}^{2}$ is used only to stabilize the unsupervised iterations and is not itself the final estimate.

The anchor estimate in~\eqref{AnchorTheta} is next used to bootstrap the one-dimensional monotone normalizing flow model. Specifically, based on the reference density under ${\mathcal H}_0$, Willie first assigns larger preliminary ${\mathcal H}_1$ weights to samples that are less likely to be explained by the Gamma reference model. Let $\{r_t^{(0)}\}_{t=1}^{T_{\rm w}}$ denote the resulting initial responsibilities, where $r_t^{(0)}\in(0,1)$ and larger observations are assigned larger initial probabilities of belonging to ${\mathcal H}_1$. Using these initial responsibilities, Willie obtains an initial flow estimate by solving the weighted maximum-likelihood problem
\begin{equation}
\widehat{\boldsymbol{\psi}}^{(0)}
=
\arg\max_{\boldsymbol{\psi}}
\sum_{t=1}^{T_{\rm w}}
r_t^{(0)}
\log f_{Y_{\rm w}\mid{\mathcal H}_1}(y_t;\boldsymbol{\psi}).
\label{BootstrapFlow}
\end{equation}
Therefore, the subsequent alternating refinement is initialized by $\delta_{\rm w}^2=\widehat{\delta}_{{\rm w},{\rm anc}}^{2}$ and $\boldsymbol{\psi}=\widehat{\boldsymbol{\psi}}^{(0)}$.

Starting from this bootstrap initialization, the parameters are refined by alternating updates. Given the current parameter values, the posterior responsibility of ${\mathcal H}_1$ for the sample $y_t$ is computed as
\begin{equation}
r_t
=
\frac{
\pi_1 f_{Y_{\rm w}\mid{\mathcal H}_1}(y_t;\boldsymbol{\psi})
}{
\pi_0 f_{Y_{\rm w}\mid{\mathcal H}_0}(y_t;\delta_{\rm w}^2)
+
\pi_1 f_{Y_{\rm w}\mid{\mathcal H}_1}(y_t;\boldsymbol{\psi})
},
\label{PosteriorRespRaw}
\end{equation}
for $t=1,\ldots,T_{\rm w}$. In practice, to maintain consistency with the prescribed prior mixture proportion $\pi_1$, a scalar offset can be introduced in the LLR before the logistic mapping so that the empirical mean of the updated responsibilities satisfies the constraint in~\eqref{SemiConstraint}. This leads to the calibrated form
\begin{equation}
r_t
=
\left[
1+\exp\!\left\{
-
\log\frac{f_{Y_{\rm w}\mid{\mathcal H}_1}(y_t;\boldsymbol{\psi})}
{f_{Y_{\rm w}\mid{\mathcal H}_0}(y_t;\delta_{\rm w}^2)}
-\log\frac{\pi_1}{\pi_0}
-\Delta
\right\} \!
\right]^{-1},
\label{PosteriorResp}
\end{equation}
where $\Delta$ is chosen such that $\frac{1}{T_{\rm w}}\sum_{t=1}^{T_{\rm w}}r_t=\pi_1$.

Given $\{r_t\}_{t=1}^{T_{\rm w}}$, the maximum-likelihood update of the Gamma scale parameter under ${\mathcal H}_0$ is
\begin{equation}
\widehat{\delta}_{{\rm w},{\rm ML}}^{2}
=
\frac{\sum_{t=1}^{T_{\rm w}} (1-r_t)y_t}
{N\sum_{t=1}^{T_{\rm w}} (1-r_t)}.
\label{ThetaML}
\end{equation}
To suppress the drift caused by inevitable soft-label contamination in the fully unsupervised setting, we adopt a shrinkage update
\begin{equation}
\widehat{\delta}_{\rm w}^{2}
=
(1-\lambda)\widehat{\delta}_{{\rm w},{\rm ML}}^{2}
+
\lambda \widehat{\delta}_{{\rm w},{\rm anc}}^{2},
\label{ThetaShrink}
\end{equation}
where $\lambda$ is a regularization coefficient and $0\le \lambda < 1$. Thus, the updated Gamma scale parameter $\widehat{\delta}_{\rm w}^{2}$ balances the current soft-label estimate and the lower-tail anchor, which improves numerical stability without imposing a noise prior.

\newcounter{myalgo}
\renewcommand{\themyalgo}{\arabic{myalgo}}

\begin{figure}[!t]
\refstepcounter{myalgo}
\hrule
\vspace{0.5ex}
\noindent\textbf{Algorithm \themyalgo} Unsupervised Semi-parametric Plug-in LLR Detector
\label{TabCode}
\vspace{0.5ex}
\hrule
\vspace{0.6ex}

\begin{minipage}{\columnwidth}
\begin{algorithmic}[1]
\Statex \textbf{Input:} Unlabeled energy samples $\mathcal{Y}=\{y_t\}_{t=1}^{T_{\rm w}}$, number of samples $N$, prior probabilities $(\pi_0,\pi_1)$, lower-tail fraction $q$, shrinkage factor $\lambda$, and maximum number of alternating iterations $I_{\max}$.

\Statex \textbf{Initialize:}
\Statex \hspace{\algorithmicindent} Build the semi-parametric mixture model in~\eqref{SemiMix}.
\Statex \hspace{\algorithmicindent} Use~\eqref{ObserveH0PDF} for the ${\mathcal H}_0$ density, and use~\eqref{ScalarFlow}--\eqref{FlowDensity} for the estimated ${\mathcal H}_1$ density.
\Statex \hspace{\algorithmicindent} Compute the anchor scale $\widehat{\delta}_{{\rm w},{\rm anc}}^{2}$ by~\eqref{AnchorTheta}, and set $\delta_{\rm w}^{2}=\widehat{\delta}_{{\rm w},{\rm anc}}^{2}$.
\Statex \hspace{\algorithmicindent} Build the initial soft responsibilities $\{r_t^{(0)}\}_{t=1}^{T_{\rm w}}$.
\Statex \hspace{\algorithmicindent} Compute the initial flow parameters $\widehat{\boldsymbol{\psi}}^{(0)}$ by~\eqref{BootstrapFlow}, and set $\boldsymbol{\psi}=\widehat{\boldsymbol{\psi}}^{(0)}$.

\For{$i=1,2,\ldots,I_{\max}$}
    \State Update the soft responsibilities $\{r_t\}_{t=1}^{T_{\rm w}}$ by \eqref{PosteriorRespRaw} or \eqref{PosteriorResp}, enforcing the mixture constraint in~\eqref{SemiConstraint}.
    \State Update the Gamma scale estimate $\widehat{\delta}_{{\rm w},{\rm ML}}^{2}$ by~\eqref{ThetaML}.
    \State Update the scale $\widehat{\delta}_{\rm w}^{2}$ by~\eqref{ThetaShrink}.
    \State Update the flow parameters $\widehat{\boldsymbol{\psi}}$ by~\eqref{MLEtheta} or~\eqref{WeightedNLL}.
    \State Set $\delta_{\rm w}^{2}=\widehat{\delta}_{\rm w}^{2}$ and $\boldsymbol{\psi}=\widehat{\boldsymbol{\psi}}$.
\EndFor

\State Compute the plug-in LLR $\hat{\ell}(y)$ by~\eqref{logLLR}.
\State Perform the detection at Willie by~\eqref{OptimaletaCom}. 

\end{algorithmic}
\end{minipage}

\vspace{0.4ex}
\hrule
\end{figure}

The flow parameters are then updated by weighted maximum likelihood. Specifically, for the current responsibilities, Willie solves
\begin{equation}
\widehat{\boldsymbol{\psi}}
=
\arg\max_{\boldsymbol{\psi}}
\sum_{t=1}^{T_{\rm w}}
r_t\,
\log f_{Y_{\rm w}\mid{\mathcal H}_1}(y_t;\boldsymbol{\psi}).
\label{MLEtheta}
\end{equation}
Equivalently, the optimization can be written as the weighted negative log-likelihood minimization
\begin{equation}
\mathcal{L}(\boldsymbol{\psi})
=
-
\frac{
\sum_{t=1}^{T_{\rm w}} r_t
\log f_{Y_{\rm w}\mid{\mathcal H}_1}(y_t;\boldsymbol{\psi})
}{
\sum_{t=1}^{T_{\rm w}} r_t
}.
\label{WeightedNLL}
\end{equation}
In implementation, the samples used in the flow update may be standardized to improve numerical conditioning, while the corresponding Jacobian correction is retained so that the density estimate remains valid on the original scale.

After the alternating updates converge, Willie obtains the plug-in likelihood ratio
\begin{equation}
\hat{\Lambda}(y)
=
\frac{
f_{Y_{\rm w}\mid{\mathcal H}_1}(y;\widehat{\boldsymbol{\psi}})
}{
f_{Y_{\rm w}\mid{\mathcal H}_0}(y;\widehat{\delta}_{\rm w}^{2})
},
\label{LLR}
\end{equation}
and the corresponding plug-in LLR
\begin{equation}
\hat{\ell}(y)
=
\log f_{Y_{\rm w}\mid{\mathcal H}_1}(y;\widehat{\boldsymbol{\psi}})
-
\log f_{Y_{\rm w}\mid{\mathcal H}_0}(y;\widehat{\delta}_{\rm w}^{2}),
\label{logLLR}
\end{equation}
where $y\ge 0$. Consequently, the final decision rule is
\begin{equation}
\hat{\ell}(y)
\mathop{\gtreqless}\limits_{{\mathcal H}_0}^{{\mathcal H}_1}
{\widehat \eta _{\rm{o}}},
\label{OptimaletaCom}
\end{equation}
where, in the general Bayesian formulation, ${\widehat \eta _{\rm{o}}}$ depends on the prior probabilities and decision costs. Assuming equal prior and error costs under ${{\mathcal H}_0}$ and ${{\mathcal H}_1}$, the rule reduces to the zero-threshold test
\begin{equation}
\hat{\ell}(y)
\mathop{\gtreqless}\limits_{{\mathcal H}_0}^{{\mathcal H}_1}
0.
\label{ZeroThreshold}
\end{equation}

It is important to note that the detector in~\eqref{OptimaletaCom} is a plug-in LLR detector rather than the NP detector, since the density under ${\mathcal H}_1$ is learned from unlabeled data and the Gamma scale parameter under ${{\mathcal H}_0}$ is jointly estimated rather than assumed known a priori. Nevertheless, this construction preserves the exact parametric structure of the Gamma reference density under ${\mathcal H}_0$, while allowing the intractable density under ${\mathcal H}_1$ to be represented by a flexible one-dimensional monotone normalizing flow model. The overall procedure is illustrated in Table~\ref{TabCode}.
 
\subsection{Impact of DRIS on Communications Between Alice and Bob}\label{Bobcommun}
According to the SJNR defined in~\eqref{SNRBob}, the implementation of the DRIS also affects the communication performance between Alice and Bob.
To quantify the impact of the DRIS on covert communications between Alice and Bob, we first derive the following statistical characteristic of the cascaded DRIS-jammed channel $h_{\rm D}^{\rm b}(t)$ in Proposition~\ref{Proposition2}.

\begin{proposition}
    \label{Proposition2}
    The random and time-varying DRIS-based term
    $h_{\rm D}^{\rm b}(t)$ converges in distribution to a complex Gaussian discussion as $N_{\rm D} \to \infty$, i.e.,
    \begin{equation}
       h_{\rm D}^{\rm b}(t)=  { {{\frac{{{\widehat{{\boldsymbol g }}}}
     {{\rm{diag}}\!\left({\boldsymbol \varphi }(t) \right)} {\widehat{\boldsymbol{h}}^{{\rm b}}_{{\rm I}} } }{{  {\mathcal{L}^{\frac{{{\nu_{{\rm{g}}}}}}{2}}} {\mathcal{L}^{\frac{{{\nu^{\rm{b}}_{{\rm{I}}}}}}{2}}} } }}} }
         \mathop  \to \limits^{\rm{d}} \mathcal{CN}\!\left( {0,  {\frac{ N_{\rm D}\overline \alpha}{{ {\mathcal{L}^{{{{\nu _{{\rm{g}}}}}}}} {\mathcal{L}^{{{{\nu^{\rm b}_{{\rm{I}} }}}}}} } }} } \right).
        \label{HDStaBob}
    \end{equation}
\end{proposition}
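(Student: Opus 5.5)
The plan is to reuse the argument behind Proposition~\ref{Proposition1}, since $h_{\rm D}^{\rm b}(t)$ has the same structure as $h_{\rm D}^{\rm w}(t)$ with $\widehat{\boldsymbol h}^{\rm w}_{\rm I}$ replaced by $\widehat{\boldsymbol h}^{\rm b}_{\rm I}\sim\mathcal{CN}({\bf 0},{\bf I}_{N_{\rm D}})$ and $\nu^{\rm w}_{\rm I}$ replaced by $\nu^{\rm b}_{\rm I}$. We write
\begin{equation*}
h_{\rm D}^{\rm b}(t)=\frac{1}{\mathcal{L}^{\nu_{\rm g}/2}\mathcal{L}^{\nu^{\rm b}_{\rm I}/2}}\sum_{r=1}^{N_{\rm D}} x_r,\qquad x_r=\widehat g_r\,\beta_r(t)e^{j\varphi_r(t)}\,\widehat h^{\rm b}_{{\rm I},r}.
\end{equation*}
We then argue that the $x_r$ are independent across $r$. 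This holds because the entries of $\widehat{\boldsymbol g}^{\rm NLOS}$, $\widehat{\boldsymbol h}^{\rm b}_{\rm I}$ and the DRIS phases are mutually independent and independent over $r$, and the LOS part of $\widehat g_r$ is deterministic.

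Next we compute the first two moments of each $x_r$, conditioning on $\widehat g_r$ and $\varphi_r(t)$. The factor $\widehat h^{\rm b}_{{\rm I},r}$ is circularly symmetric with zero mean and independent of the other factors. Hence $\mathbb E[x_r]=0$ and the pseudo-variance $\mathbb E[x_r^2]=0$. For the variance,
\begin{equation*}
\mathbb E|x_r|^2=\mathbb E|\widehat g_r|^2\,\mathbb E[\beta_r^2(t)]\,\mathbb E|\widehat h^{\rm b}_{{\rm I},r}|^2 .
\end{equation*}
Here $\mathbb E|\widehat g_r|^2=\frac{\kappa_{\rm g}}{1+\kappa_{\rm g}}\cdot 1+\frac{1}{1+\kappa_{\rm g}}\cdot 1=1$, because $|[\widehat{\boldsymbol g}^{\rm LOS}]_r|=1$ and the LOS/NLOS cross terms vanish in expectation. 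Also $\mathbb E[\beta_r^2]=\sum_i p_i\alpha_i^2=\overline\alpha$, using $\beta_r=\mathcal F(\varphi_r)$ with amplitude $\alpha_i$ attached to phase $\phi_i$. Thus every $x_r$ is zero-mean and circularly symmetric with variance $\overline\alpha$, the same value for all $r$.

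The last step applies Remark~\ref{Theorem1} to the real and imaginary parts jointly, via the Cramér--Wold device. Circularity makes the real and imaginary parts uncorrelated, each with variance $\overline\alpha/2$. It follows that $\sum_r x_r/\sqrt{N_{\rm D}}\to\mathcal{CN}(0,\overline\alpha)$. Rescaling by $\sqrt{N_{\rm D}}/(\mathcal{L}^{\nu_{\rm g}/2}\mathcal{L}^{\nu^{\rm b}_{\rm I}/2})$ then gives the variance $N_{\rm D}\overline\alpha/(\mathcal{L}^{\nu_{\rm g}}\mathcal{L}^{\nu^{\rm b}_{\rm I}})$ in~\eqref{HDStaBob}, read in the usual asymptotic-approximation sense.

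The main obstacle is that the $x_r$ are not identically distributed. The near-field LOS phase $e^{-j\frac{2\pi}{\lambda}(d_r-d_0)}$ varies with $r$, so Remark~\ref{Theorem1} does not literally apply. I would handle this in one of two ways. The first option is to note that the phase factor $e^{j\varphi_r(t)}$ together with the circular $\widehat h^{\rm b}_{{\rm I},r}$ makes the law of $x_r$ invariant to any deterministic phase rotation of $\widehat g_r$. Hence $x_r\overset{d}{=}|\,\cdot\,|$-equivalent variables whose distribution does not depend on $r$, and the $x_r$ are in fact i.i.d. The second option, if that fails, is to invoke the Lindeberg condition. It holds trivially here, since all $x_r$ have uniformly bounded fourth moments: the $\beta_r$ are bounded and the Gaussian factors have finite moments.
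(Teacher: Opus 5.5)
Your proposal is correct and follows the paper's route (Appendix~\ref{AppendixA}, to which Appendix~\ref{AppendixB} defers): zero mean and per-element variance $\overline{\alpha}$, then the central limit theorem, with the $N_{\rm D}$-dependent limit read as an approximation. You add rigor the paper skips: the single-sum form with $\mathbb{E}|\widehat g_r|^2=1$, the zero pseudo-variance check, and your handling of the $r$-dependent LOS phase, which is absorbed by the circularity of $\widehat h^{\rm b}_{{\rm I},r}$ and of the NLOS part of $\widehat g_r$ rather than by the discrete DRIS phase; the paper instead splits the channel into LOS and NLOS sums that share $\widehat h_{{\rm I},r}$ and $\varphi_r(t)$, and then invokes Lindeberg--L\'evy without justifying how the two sums combine.
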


\begin{IEEEproof}
    See Appendix~\ref{AppendixB}.
\end{IEEEproof}

Conditioned on the fact that the covert symbol $s(m)$ is independent of the wireless channels, the SJNR in~\eqref{SNRBob} can be reduced to
 \begin{equation}
    {\gamma _{\rm{b}}} = \frac{\mathbb{E} \!\!\left[\left|{ {{{ h}^{\rm{b}}_{{\rm{d}}}} } } { { s(m)} }\right|^2\right]}
    {\mathbb{E} \!\!\left[\left| { {{{ h}^{\rm{b}}_{{\rm{D}}}(m)}s(m)} }\right|^2 \right]+ \delta^2_{\rm b}}.
     \label{SNRBobReduce}
\end{equation}
Substituting~\eqref{HDStaBob} into~\eqref{SNRBobReduce}, we have
 \begin{equation}
    {\gamma _{\rm{b}}} \mathop  \to \limits^{\rm{d}}  \frac{\frac{P_0}{{\mathcal{L}^{{{{\nu^{\rm b}_{{\rm{d}} }}}}}}} }
    { {\frac{ P_0 N_{\rm D}\overline \alpha}{{ {\mathcal{L}^{{{{\nu _{{\rm{g}}}}}}}} {\mathcal{L}^{{{{\nu^{\rm b}_{{\rm{I}} }}}}}} } }}+ \delta^2_{\rm b}}, \; {\rm{as}}\; N_{\rm D} \to \infty.
     \label{SNRBobReduceAr}
\end{equation}

In traditional covert communications, while higher transmit power improves the communication performance between Alice and Bob, it also increases the probability of detection by the warden Willie. However, in the cover communication system in the presence of a DRIS, it can be seen from~\eqref{SNRBobReduceAr} that increasing transmit power $P_0$ not only results in a higher detection probability at the warden, but also amplifies the DRIS-induced ACA, i.e., ${\frac{ P_0 N_{\rm D}\overline \alpha}{{ {\mathcal{L}^{{{{\nu _{{\rm{g}}}}}}}} {\mathcal{L}^{{{{\nu^{\rm b}_{{\rm{I}} }}}}}} } }}$, which degrades the communication performance between Alice and Bob rather than enhancing it.
The DRIS not only jams the covert transmissions between Alice and Bob, but also decreases the error probabilities of Willie's detections, without either Alice--Bob channel knowledge or additional jamming power.

Based on the theoretical SJNR in~\eqref{SNRBobReduceAr}, the effect of the DRIS quantization resolution is seen to enter primarily through
$\overline{\alpha}=\sum_{i=1}^{2^b} p_i \alpha_i^2$, namely, the average squared reflection amplitude induced by the random DRIS coefficients. Hence, for a special configuration and hardware response, increasing the DRIS phase quantization resolution does not necessarily lead to a substantial additional degradation in Bob's SJNR, because the resulting change in $\overline{\alpha}$ is often limited. This indicates that the performance gain obtained by using finer DRIS phase quantization is generally marginal. Consequently, a 1-bit phase quantization DRIS can already produce sufficiently strong DRIS-induced ACA to degrade  communications between Alice and Bob. Combined with~\eqref{SNRBobReduceAr}, this also shows that increasing Alice's transmit power does not significantly improve Bob's SJNR in the presence of the DRIS. Instead, it amplifies the DRIS-induced ACA and increases Alice's risk of detection by Willie.

\section{Simulation Results and Discussion}\label{Simu}
In this section, we perform Monte Carlo simulations to evaluate the impact of the DRIS on covert communications and to validate our theoretical analysis and the proposed unsupervised semi-parametric plug-in likelihood-ratio detector in Section~\ref{AIDec}. Unless otherwise stated, the default parameters are as follows. Alice, equipped with a single antenna, is located at (0 m, 0 m, 5 m), while Bob, also equipped with a single antenna, is uniformly distributed within an annular region centered at (0 m, 140 m, 0 m) with inner and outer radii of 10 m and 20 m, respectively.
The warden Willie is positioned at (0 m, 100 m, 0 m) to monitor potential covert transmissions.
A DRIS consisting of 2048 reflective elements ($N_{{\rm D},h} = 64, N_{{\rm D},v} = 32$) is deployed at ($-d _ { A D }$ m, 0 m, 5 m), with the distance between Alice and the DRIS center set to $d _ { A D } = 1.5$ m.
The DRIS employs 1-bit quantized reflection coefficients, with phase shifts chosen equiprobably from $\Psi = \{\frac{\pi}{9},\frac{7\pi}{6}\}$ and thier corresponding amplitudes determined by the hardware mapping $\Omega  = {\cal F}\!\left({\Theta}\right) = \{0.8,1\}$~\cite{CRIS1}. This configuration results in $\overline \alpha $ = 0.82, consistent with Proposition~\ref{Proposition1}.

The received signals at Willie are processed using the detection rule in~\eqref{decisionrule}, accumulating $N=5$ samples within each channel coherence interval. Large-scale fading parameters follow 3GPP standards~\cite{3GPP} as specified in Table~\ref{tab1}. The AWGN variance is $\sigma^2_{\rm c} = -170 + 10\log_{10}\left(BW\right)$ dBm with $BW = 180$ kHz. 
In the proposed unsupervised detector, we employ the first 16384 test statistics (6.25\%) in the unlabeled data set for training, and the remaining 245760 test statistics (93.75\%) are used for performance evaluation. 
Furthermore, the unlabeled test statistics are modeled by the semi-parametric mixture in Section~\ref{WillieDecAI}: the ${\mathcal H}_0$ component retains the Gamma form with fixed shape parameter $N=5$ and unknown scale parameter $\delta^2_{\rm w}$, while the ${\mathcal H}_1$ component is represented by a flexible one-dimensional monotone normalizing flow model. 

\begin{table}
    \footnotesize
    \centering
    \caption{Wireless Channel Simulation Parameters}
    \label{tab1}
    \begin{threeparttable}
    \begin{tabular}{ c|c }
    \hline
    Large-scale Parameter        &Value\\
    \hline
    LoS fading       & $ 35.6 + 22{\log _{10}}({d}) $ (dB) \\
    \hline
    NLoS fading      &$32.6+36.7{\log _{10}}({d})$ \\
    \hline
    \end{tabular}
    \end{threeparttable}
\end{table} 

In the simulations, the prior probabilities of ${\mathcal H}_0$ and ${\mathcal H}_1$ occurring are set to $(\pi_0,\pi_1)=(0.5,0.5)$, consistent with the equal-prior Bayesian decision rule. For initialization, the lower-tail fraction is set to $q=0.08$ to compute the anchor scale $\widehat{\delta}_{{\rm w},{\rm anc}}^{2}$, and the initial soft responsibilities are then constructed from the Gamma-tail scores under ${\mathcal H}_0$. The ${\mathcal H}_1$ density is implemented by a one-dimensional monotone normalizing flow composed of four monotone layers, each using eight nonlinear terms. The flow model is trained using Adam with learning rate $2\times 10^{-4}$, where the bootstrap initialization stage uses 180 epochs and each subsequent alternating update uses 90 epochs. The alternating refinement is run for $I_{\max}=4$ iterations, and the Gamma scale update under ${\mathcal H}_0$ uses the shrinkage factor $\lambda=0.4$.  

Fig.~\ref{fig3} illustrates Willie's DEP on the left y-axis 
and Bob's achievable rate (bits/symbol) on the right y-axis versus Alice's transmit power.
Specifically, the DEP curves include: 1) the supervised plug-in likelihood-ratio detector without a DRIS (Supervised W/O DRIS), 2) the proposed unsupervised semi-parametric plug-in likelihood-ratio detector without a DRIS (Proposed W/O DRIS), 3) the benchmark scheme without a DRIS in~\cite{huang2025simultaneously} (DEP W/O DRIS~\cite{huang2025simultaneously}), 4) the supervised plug-in likelihood-ratio detector with a DRIS (Supervised W/ DRIS), 5) the proposed unsupervised semi-parametric plug-in likelihood-ratio detector with a DRIS (Proposed W/ DRIS), and 6) the benchmark scheme with a DRIS in~\cite{huang2025simultaneously} (DEP W/ DRIS~\cite{huang2025simultaneously}). 
The achievable-rate curves include: 1) the rate without a DRIS (Rate W/O DRIS), 2) the rate with a DRIS (Rate W/ DRIS), 3) the theoretical rate computed from~\eqref{SNRBobReduceAr} (Theoretical in~\eqref{SNRBobReduceAr}), 4) the rate under an active jammer (AJ) with $-7$ dBm jamming power (AJ @ $-7$ dBm), and 5) the rate under an AJ with $3$ dBm jamming power (AJ @ $+3$ dBm).

\begin{figure}[!t]
\centering
\includegraphics[scale=0.61]{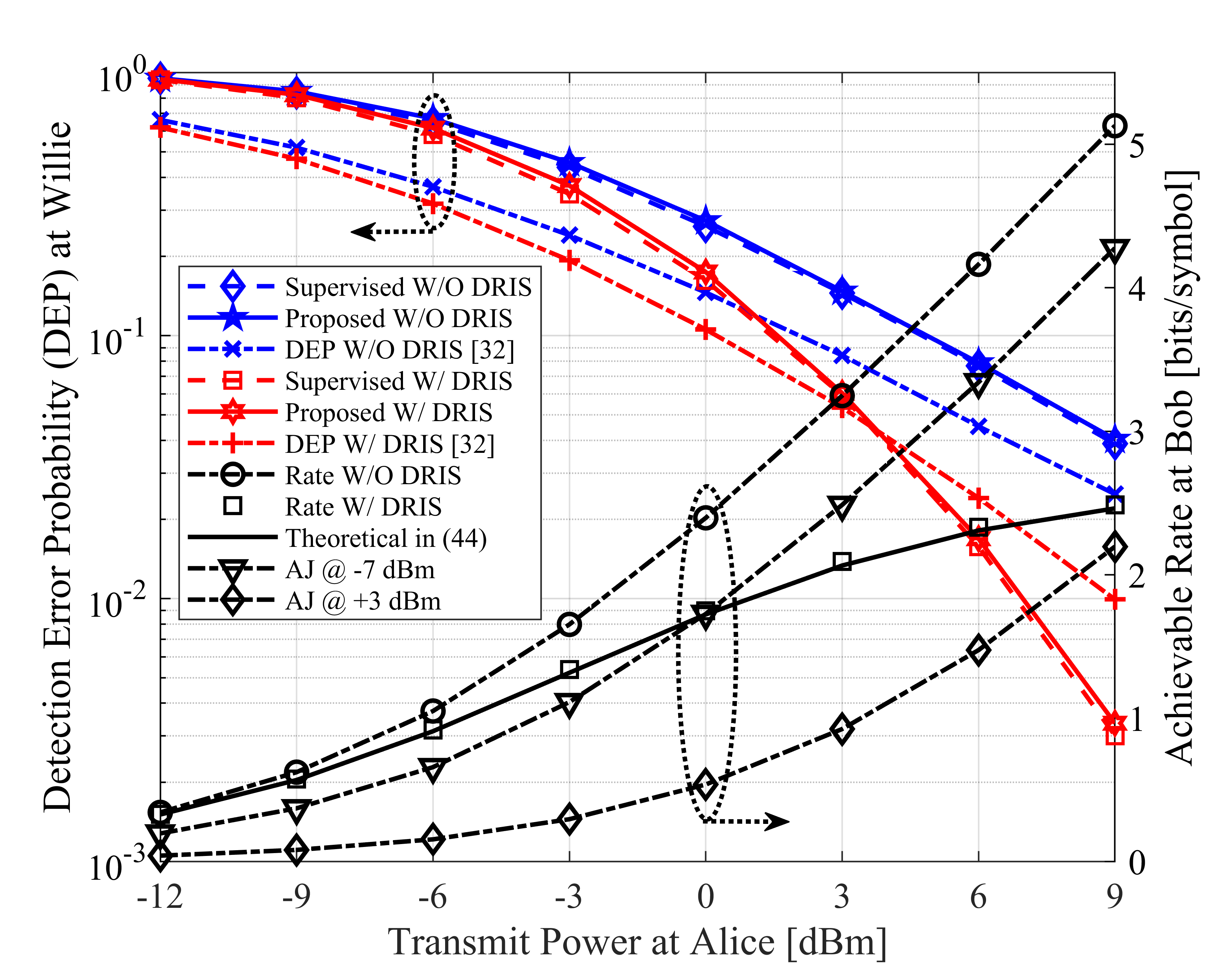}
\caption{Detection error probability (DEP) at Willie (left y-axis) and achievable rate at Bob (right y-axis) vs. transmit power.}
\label{fig3}
\end{figure}

From Fig.~\ref{fig3}, first, the DEP at Willie decreases monotonically with Alice's transmit power for all considered schemes. 
Moreover, for a given transmit power, the DEP curves with a DRIS are consistently below their counterparts without a DRIS, which confirms that the DRIS improves Willie's monitoring performance. 
More importantly, the DEP achieved by the proposed unsupervised detector closely tracks that of its supervised counterpart in both scenarios, especially when the DRIS is present. 
This result indicates that the proposed detector can accurately capture the analytically intractable density under ${\mathcal H}_1$ from unlabeled data while jointly estimating the effective Gamma scale $\widehat \delta _{\rm{w}}^2$ under ${\mathcal H}_0$, without relying on prior knowledge of the noise variance. 
The reference DEP curves from~\cite{huang2025simultaneously} exhibit the same overall trend, further corroborating that the DRIS makes Alice's transmission more distinguishable at Willie.
Moreover, the absolute DEP values under the DRIS are not identical, because the two schemes adopt different test statistics. 
Specifically, the proposed detector uses the accumulated energy statistic $Y_{\rm w}=\sum_{n=1}^{N}|y_{\rm w}(n)|^2$ and performs detection based on the plug-in LLR, whereas the scheme in~\cite{huang2025simultaneously} is based on per-sample decisions.
Hence, numerical differences in DEP are expected even under the same DRIS setting.

Second, the achievable rate without a DRIS increases rapidly with transmit power, whereas the achievable rate with a DRIS grows much more slowly. 
In addition, the simulated rate curve with a DRIS agrees well with the theoretical curve obtained from~\eqref{SNRBobReduceAr}, which validates our asymptotic analysis. 
As the transmit power increases, the widening gap between the two rate curves shows that a larger $P_0$ does not result in a larger communication gain in the presence of the DRIS.
This behavior is consistent with the analytical analysis: increasing $P_0$ strengthens not only the desired signal but also the DRIS-induced ACA term, thereby limiting the net SJNR improvement at Bob. 

For further comparison, the achievable rate with a DRIS remains between the two active jamming benchmarks over most of the considered power range and becomes close to the AJ benchmark with $+3$ dBm jamming power in the high transmit power regime. This is particularly noteworthy because the DRIS achieves such a jamming impact without requiring dedicated jamming power. Hence, Fig.~\ref{fig3} further demonstrates the strategic advantage of the DRIS: it simultaneously reduces Willie's DEP and suppresses Bob's achievable rate, without requiring either CSI or additional jamming power. By contrast, although the impact of an AJ increases with its jamming power, this benefit comes at the cost of persistent energy consumption. This is because the AJ must remain active since Willie does not know a priori whether Alice is transmitting. It is worth noting that the DRIS affects covert communications differently under low- and high-transmit-power regimes at Alice. Therefore, the following discussion focuses on the impact of different factors in these two regimes.
 
\begin{figure*}[!t]
    \centering
    \subfloat{
            \includegraphics[scale=0.596]{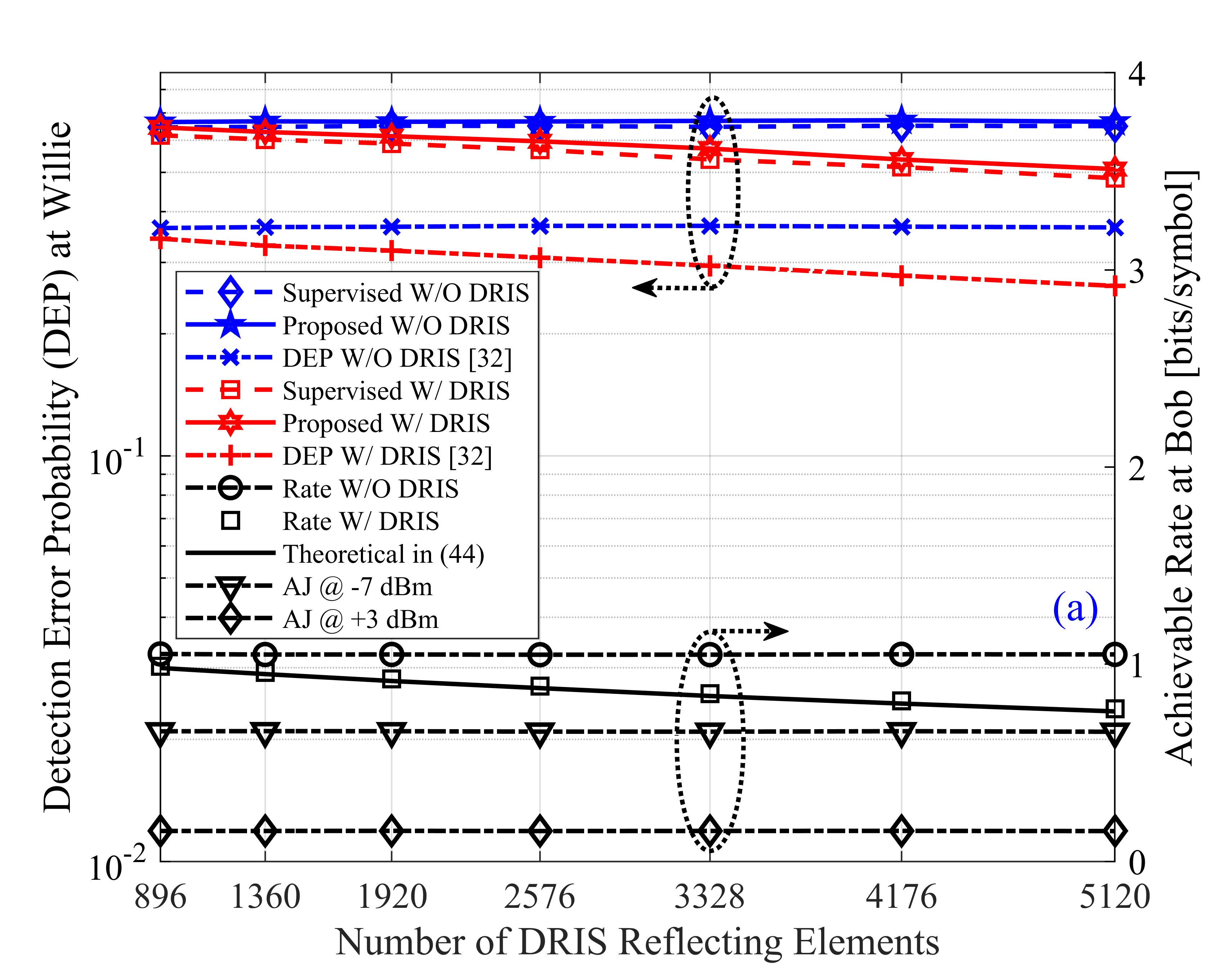}}\hspace{12pt}
    \subfloat{
            \includegraphics[scale=0.596]{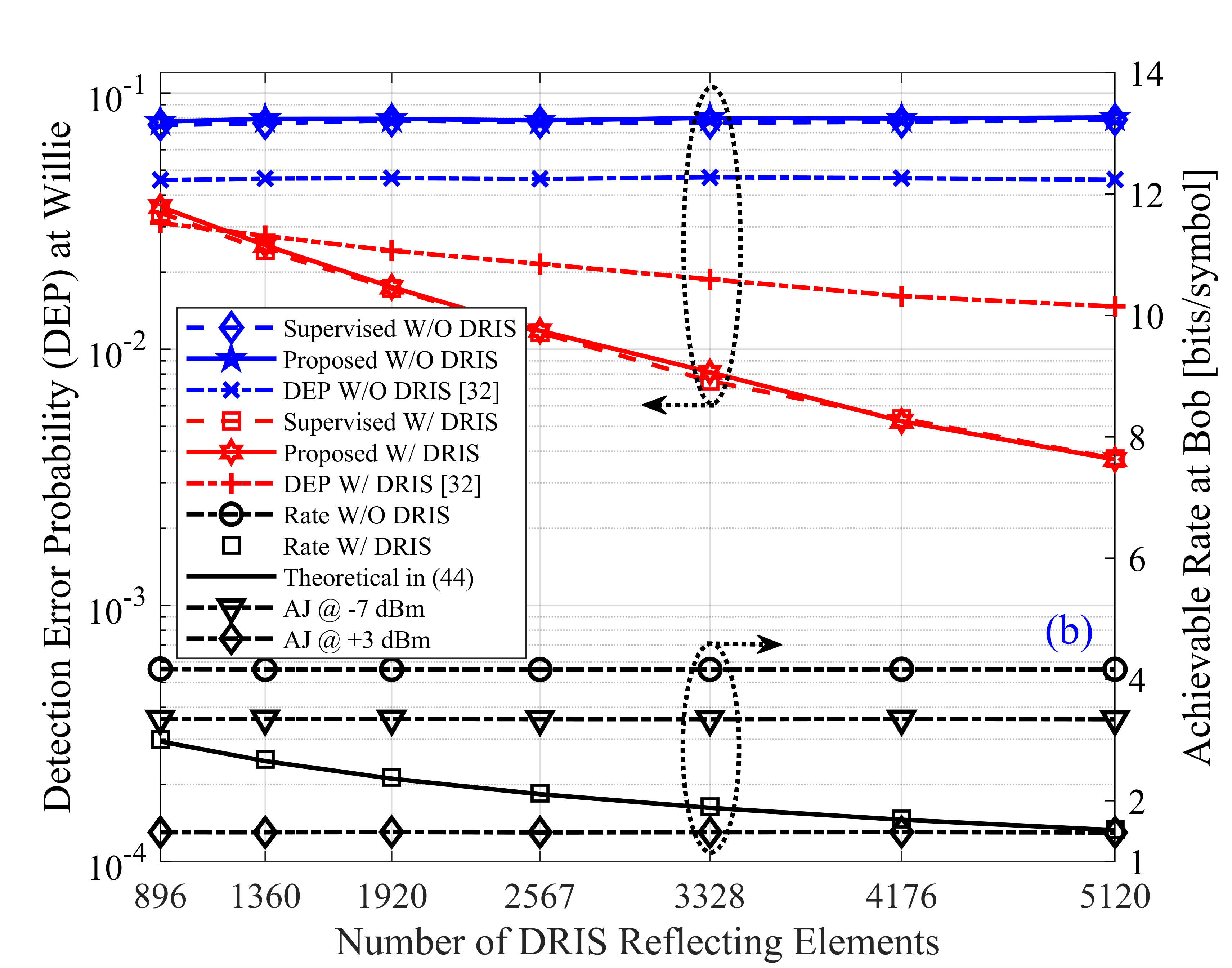}}
   \caption{Detection error probability (DEP) at Willie (left y-axis) and achievable rate at Bob (right y-axis) vs. number of DRIS reflecting elements at (a) low transmit power (-6 dBm) and (b) high transmit power (6 dBm).}
    \label{ResFigND}
\end{figure*}

\begin{figure*}[!t]
    \centering
    \subfloat{
            \includegraphics[scale=0.596]{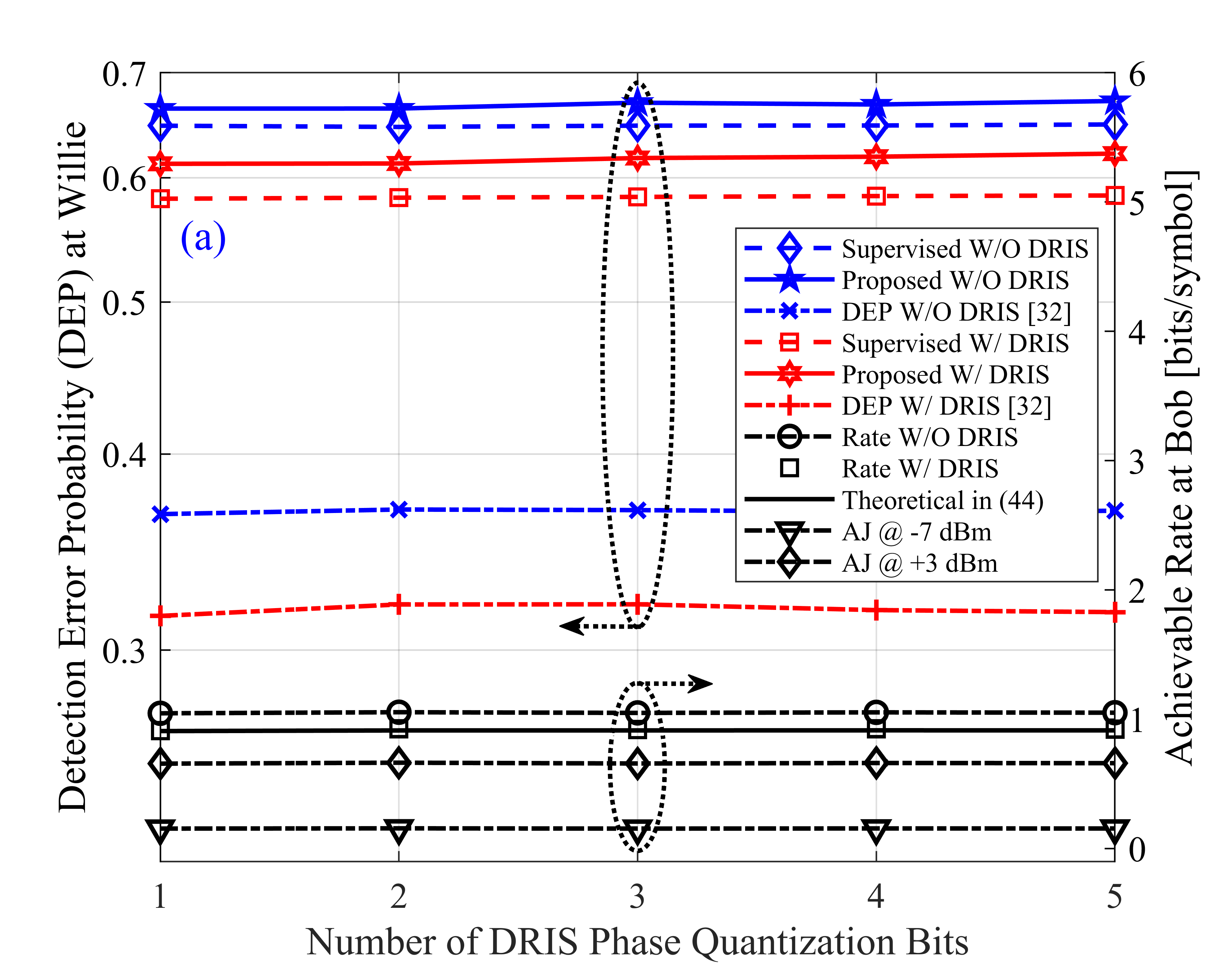}}\hspace{12pt}
    \subfloat{
            \includegraphics[scale=0.596]{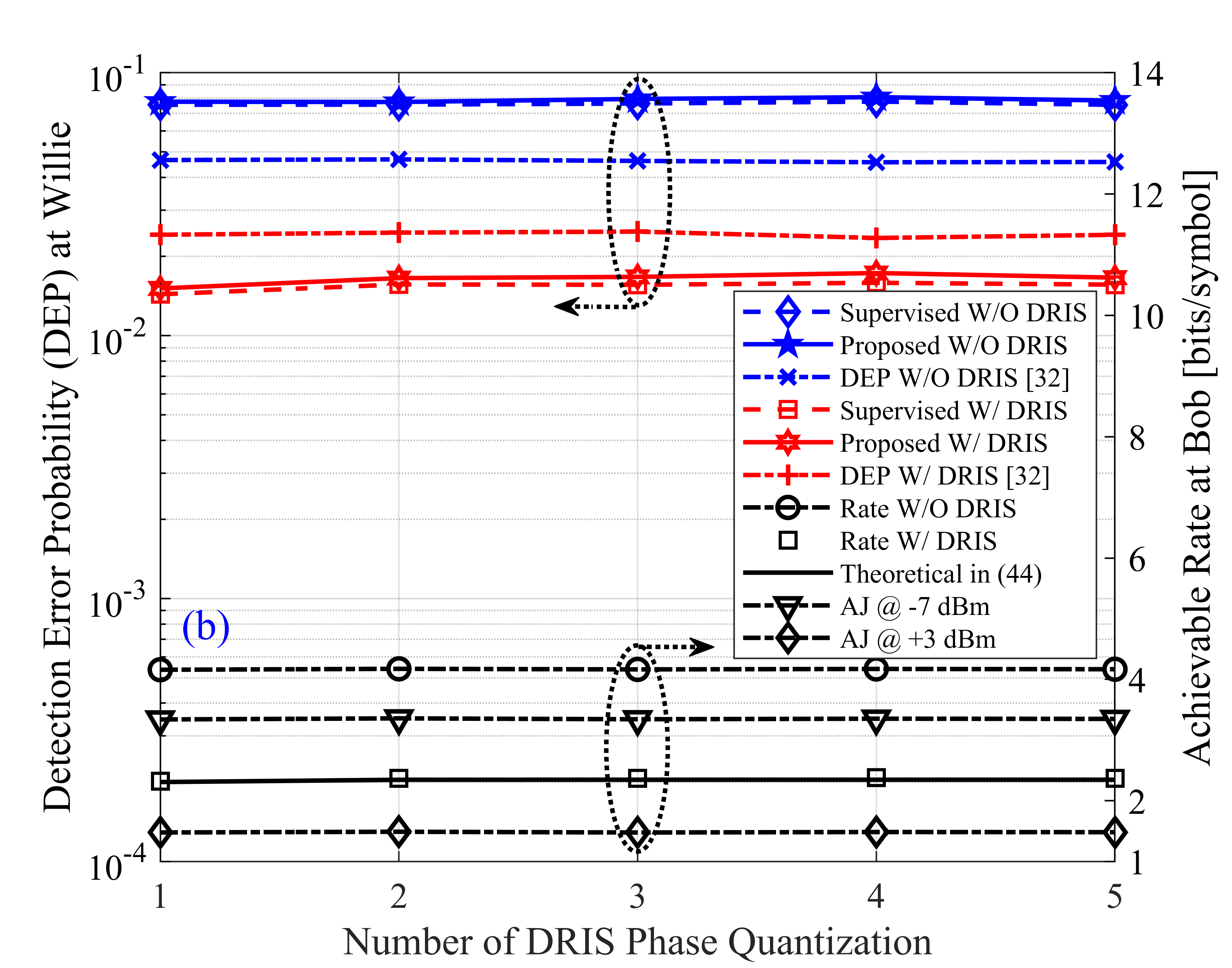}}
   \caption{Detection error probability (DEP) at Willie (left y-axis) and achievable rate at Bob (right y-axis) vs. number of DRIS phase quantization bits at (a) low transmit power (-6 dBm) and (b) high transmit power (6 dBm).}
    \label{ResFigBit}
\end{figure*}

Fig.~\ref{ResFigND} shows the DEP at Willie and the achievable rate at Bob versus the number of DRIS reflecting elements under the low-transmit-power ($-6$ dBm) and high-transmit-power ($6$ dBm) regimes at Alice. For both regimes, the DEP achieved by the proposed unsupervised detector remains close to that of its supervised counterpart, with and without a DRIS, over the entire range of reflecting elements, which confirms the effectiveness of the proposed detector under different DRIS sizes. This suggests that the proposed detector can reliably learn the analytically intractable density under ${\mathcal H}_1$ while simultaneously estimating the Gamma scale $\widehat \delta _{\rm{w}}^2$ under ${\mathcal H}_0$.

In the low-transmit-power regime shown in Fig.~\ref{ResFigND}~(a), increasing the number of DRIS reflecting elements only moderately reduces Willie's DEP and causes a relatively mild loss in Bob's achievable rate. In the high-transmit-power regime shown in Fig.~\ref{ResFigND}~(b), however, the DEP with a DRIS decreases much more rapidly and Bob's achievable rate drops significantly as the number of reflecting elements increases. This behavior is consistent with Proposition~\ref{Proposition1} and Proposition~\ref{Proposition2}, where the DRIS-induced ACA scales with $N_{\rm D}\overline{\alpha}$. Enlarging the DRIS has only a limited impact when $P_0$ is small, but becomes much more effective when $P_0$ is large.
Moreover, the simulated achievable-rate curves with a DRIS also agree well with the theoretical curves in both regimes, which validates the asymptotic analysis in Section~\ref{Bobcommun}. 

To investigate the impact of the DRIS phase-quantization bits, we consider a $b$-bit quantized DRIS with $2^b$ discrete phase-shift values. Following~\cite{PGFun1}, the phase-shift set is modeled as $\Psi_b=
\left\{-\frac{\pi}{2}, -\frac{\pi}{2}+\frac{\pi}{2^{b-1}}, \ldots, \frac{3\pi}{2}-\frac{\pi}{2^{b-1}}\right\}$.
For each DRIS reflecting element, the time-varying phase shift is uniformly drawn from $\Psi_b$, and the corresponding amplitude is determined by the practical phase-dependent hardware response
\begin{align}
\beta_n(t)
&=
{\mathcal F}\!\left(\varphi_n(t)\right) \nonumber\\
&=
(1-\alpha_{\min})
\left(
\frac{\sin\!\left(\varphi_n(t)-\phi\right)+1}{2}
\right)^{\mu}
+\alpha_{\min},
\label{PAFun_bit}
\end{align}
where $\varphi_n(t)\in\Psi_b$, $\alpha_{\min}=0.8$, $\mu=1.6$, and $\phi=0$~\cite{PGFun1}. Under equiprobable phase selection, the average squared  amplitude entering the asymptotic analysis is
$\overline{\alpha} = \frac{1}{2^b}\sum_{\varphi\in\Psi_b} {\mathcal F}^2(\varphi)$.

\begin{figure*}[!t]
    \centering
    \subfloat{
            \includegraphics[scale=0.596]{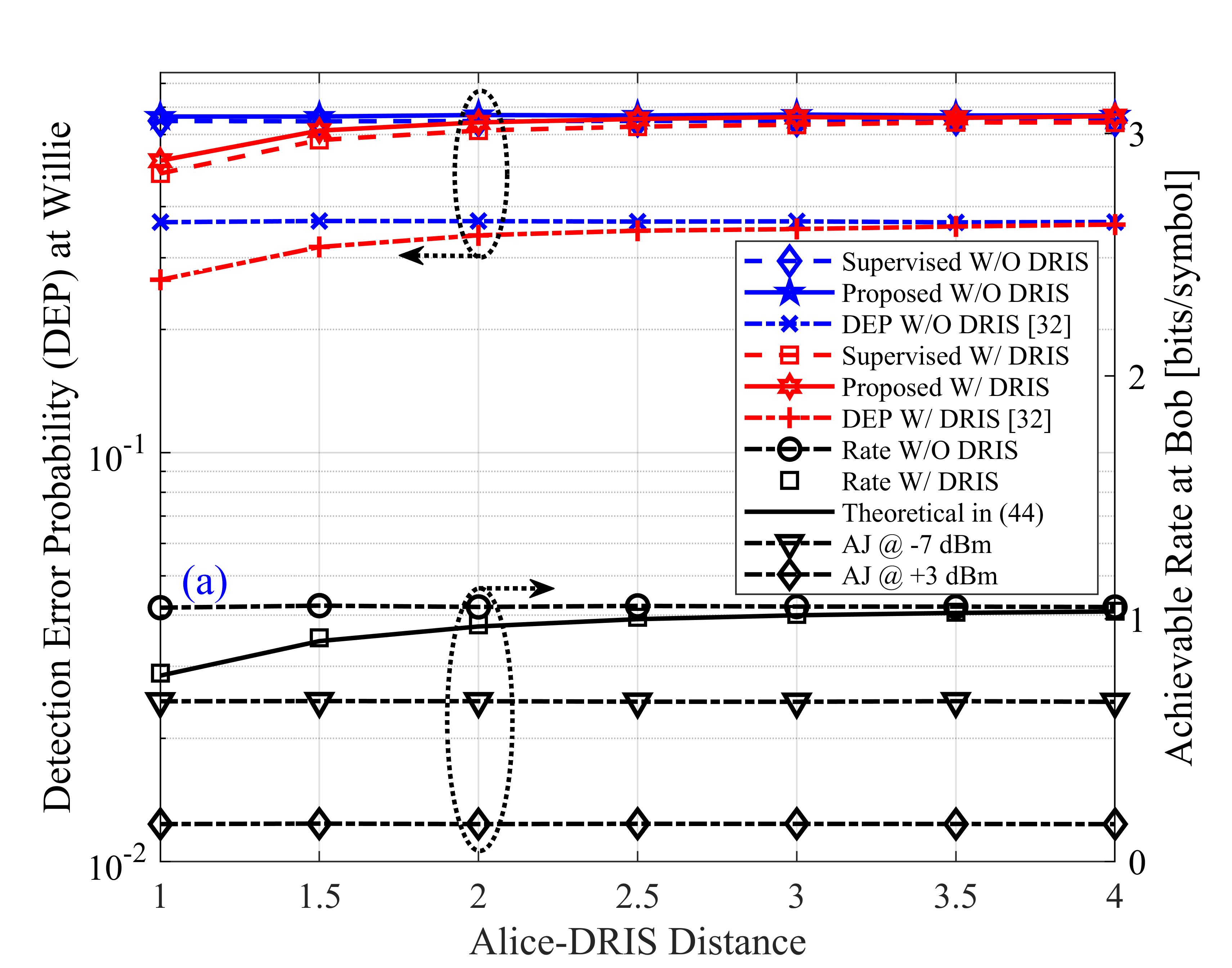}}\hspace{12pt}
    \subfloat{
            \includegraphics[scale=0.596]{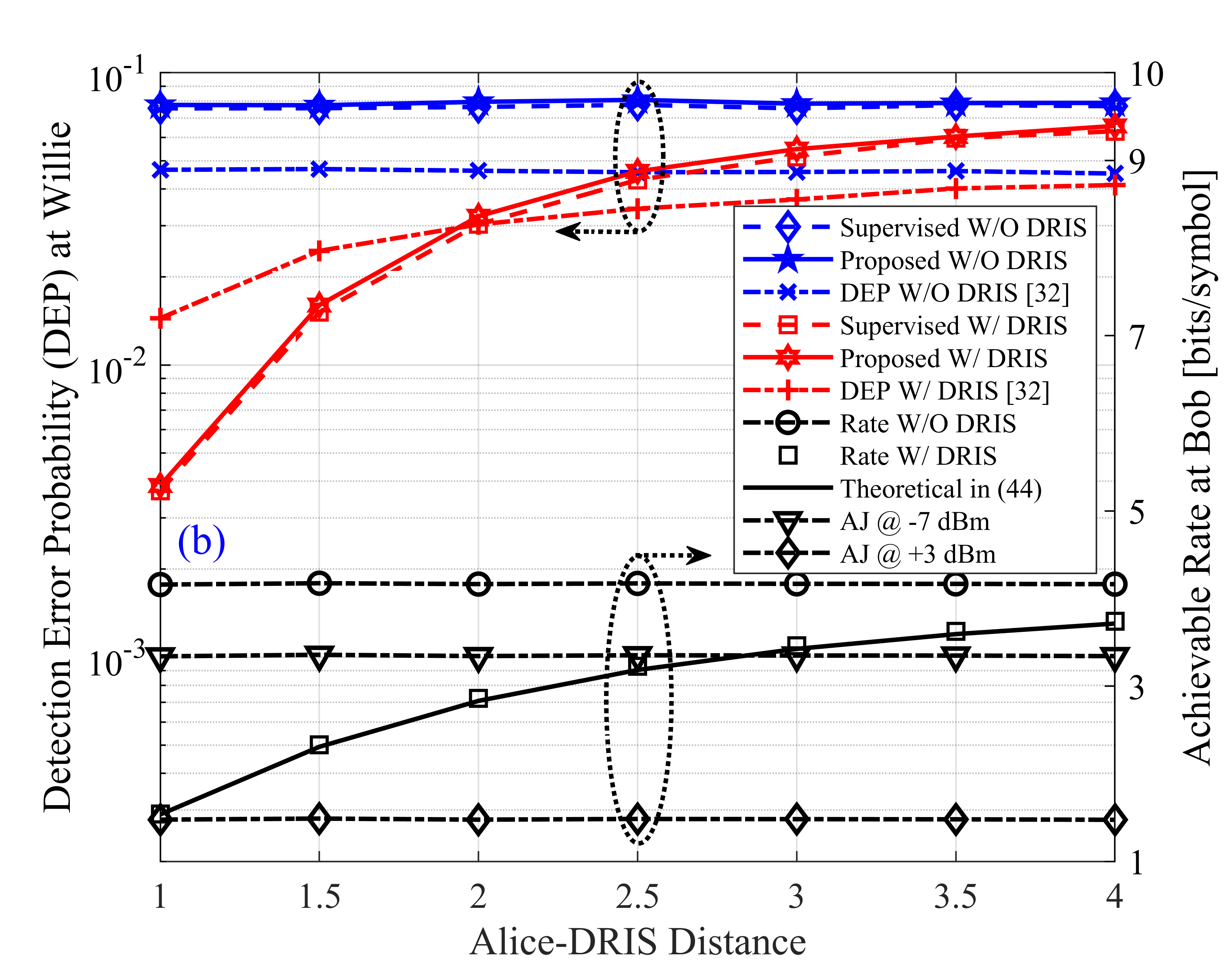}}
   \caption{Detection error probability (DEP) at Willie (left y-axis) and achievable rate at Bob (right y-axis) vs. distance between Alice and DRIS at (a) low transmit power (-6 dBm) and (b) high transmit power (6 dBm).}
    \label{ResFigDisAD}
\end{figure*}

Fig.~\ref{ResFigBit} shows the DEP at Willie and the achievable rate at Bob versus the DRIS phase-quantization resolution. For all considered quantization bits, the DEP achieved by the proposed unsupervised detector remains close to that of its supervised counterpart, with and without a DRIS, indicating stable detection performance across different DRIS phase quantization bits.
Increasing the DRIS phase-quantization resolution only slightly changes both Willie's DEP and Bob's achievable rate. In particular, the DEP with a DRIS improves only marginally as the number of quantization bits increases, while Bob's achievable rate under the DRIS decreases only slightly. 
The impact of the DRIS quantization resolution enters mainly through $\overline{\alpha}$. According to the practical phase-dependent amplitude response in~\eqref{PAFun_bit}, increasing the number of quantization bits does not significantly alter the average squared reflection amplitude. Therefore, it does not substantially strengthen the DRIS-induced ACA.  As a result, both the additional monitoring gain at Willie and the additional communication degradation at Bob caused by finer phase quantization are limited.

From Fig.~\ref{ResFigBit}, the Rate W/ DRIS curve is close to its theoretical curve, which further validates the asymptotic analysis. More importantly, the above results show that even a 1-bit phase-quantized DRIS is already sufficient to reduce Willie's DEP and suppress Bob's achievable rate. Therefore, increasing the phase-quantization resolution beyond 1 bit provides only marginal additional benefit for Willie. This result is practically significant since it shows that a low-complexity DRIS with coarse phase quantization can achieve most monitoring and jamming effects without fine-grained phase control.

Fig.~\ref{ResFigDisAD} shows the DEP at Willie and the achievable rate at Bob versus the distance between Alice and the DRIS. For both low- and high-transmit-power regimes, the DEP achieved by the proposed unsupervised detector remains close to that of its supervised counterpart over the entire distance range, indicating that the proposed detector remains effective as the Alice--DRIS distance varies.
When the DRIS is present, increasing the Alice--DRIS distance $d _ { A D }$ weakens the impact of the DRIS on both Willie and Bob. This is because a larger $d _ { A D }$ results in stronger large-scale fading and thus reduces the strength of the cascaded DRIS channels, which weakens the DRIS-induced ACA. Consequently, Willie's monitoring performance gain is reduced, as reflected by the increasing DEP with a DRIS, while Bob's achievable rate improves as the DRIS-induced ACA becomes weaker.  

\section{Conclusions}\label{conclusions}
In this paper, we investigated covert communications in the presence of a DRIS and developed a novel unsupervised semi-parametric plug-in likelihood-ratio detector for Willie. The proposed detector retains a parametric Gamma reference model under ${\mathcal H}_0$ with an unknown scale parameter, while learning the analytically intractable density under ${\mathcal H}_1$ from unlabeled data through a one-dimensional monotone normalizing flow model. The results show that the DRIS can simultaneously improve Willie's monitoring performance and degrade the communication performance between Alice and Bob, without requiring either CSI or additional jamming power. Meanwhile, the proposed unsupervised detector does not rely on labeled data, which makes it well suited to adversarial covert-communication scenarios. The main conclusions are summarized as follows.

1) The proposed unsupervised semi-parametric plug-in likelihood-ratio detector enables Willie to perform effective monitoring even when neither the density under ${\mathcal H}_1$ nor the noise variance $\delta^2_{\rm w}$ under ${\mathcal H}_0$ is known a priori. By exploiting the structural prior that the test statistic under ${\mathcal H}_0$ (i.e., pure noise) follows a Gamma distribution, the prposed unsupervised detector jointly estimates the Gamma scale and learns the intractable density under ${\mathcal H}_1$ directly from unlabeled observations. Simulation results show that the proposed unsupervised detector achieves monitoring performance close to that of its supervised counterpart.

2) The DRIS, which has random and time-varying reflection configurations, introduces ACA, thereby creating a strategic advantage for Willie. Specifically, it lowers Willie's DEP and suppresses Bob's achievable rate without requiring CSI or additional jamming power. Moreover, to maximize the DRIS-induced impact, it is advantageous for Willie to deploy the DRIS close to Alice, since a shorter Alice--DRIS distance strengthens the cascaded DRIS channels and intensifies the resulting ACA.

3) Due to the DRIS-induced ACA, increasing Alice's transmit power does not significantly improve Bob's communication performance. Instead, it strengthens the DRIS-induced impairment and increases Alice's risk of detection by Willie. In addition, the results show that expanding the DRIS is effective. However, increasing the DRIS phase quantization bits beyond one bit only provides a marginal benefit. Hence, even a low-complexity DRIS with 1-bit phase quantization is sufficient to substantially improve the Willie's monitoring performance and degrade the communication performance.

Overall, this work demonstrates that a DRIS, together with the proposed unsupervised semi-parametric plug-in likelihood-ratio detector, can effectively monitor and jam covert communications without requiring labeled data or additional jamming power. One interesting direction for future work is to investigate practical countermeasures at Alice and Bob, such as anomaly-aware transmission adaptation, environment sensing, and DRIS-aware robust communication strategies.

\begin{appendices}
\section{Proof of Proposition~\ref{Proposition1}}\label{AppendixA}
Based on the cascaded DRIS channel between Alice and Willie expressed in~\eqref{hDEx} and 
the definition that 
$\varphi(t) = \big[\beta_{1}(t)e^{j\varphi_{1}(t)}, \cdots, \beta_{N_{D}}(t)e^{j\varphi_{N_{D}}(t)}\big]$, 
we can rewrite ${ h}^{\rm w}_{\rm D}(t) $ as

\begin{equation}
\begin{split}  
{ h}^{\rm w}_{\rm D}(t)
= \!\sqrt{\!\!\frac{\kappa _{g}}{(\kappa _{g}+1){\mathcal{L}^{{{{\nu _{{\rm{g}}}}}}}}{\mathcal{L}^{{{{\nu^{\rm w}_{{\rm{I}} }}}}}} }}
\!\sum_{r=1}^{N_{D}}\!\big[{{\widehat{{\boldsymbol g }}}}^{\mathrm{LOS}}\big]_{r}\!\big[{\widehat{\boldsymbol{h}}^{{\rm w}}_{{\rm I}} }\big]_{r}
\beta_{r}(t)e^{j\varphi_{r}(t)}\\
+ \!\sqrt{\!\!\frac{1}{(\kappa _{g}+1){\mathcal{L}^{{{{\nu _{{\rm{g}}}}}}}}{\mathcal{L}^{{{{\nu^{\rm w}_{{\rm{I}} }}}}}}}}
\!\sum_{r=1}^{N_{D}}\!\big[{{\widehat{{\boldsymbol g }}}}^{\mathrm{NLOS}}\big]_{r}\!\big[{\widehat{\boldsymbol{h}}^{{\rm w}}_{{\rm I}} }\big]_{r}
\beta_{r}(t)e^{j\varphi_{r}(t)}.
\label{eq:A2}
\end{split}
\end{equation}

Assuming that ${\widehat{\boldsymbol{h}}^{{\rm w}}_{{\rm I}} }$, $\varphi(t)$, and ${{\widehat{{\boldsymbol g }}}}$ consist of independent i.i.d. elements
and $\mathbb{E}[{\widehat{\boldsymbol{h}}^{{\rm w}}_{{\rm I}} }]=0$,
the expectations of the terms in~\eqref{eq:A2} are obtained as follows:
\begin{equation}
\begin{split} 
&\mathbb{E}\!\left[\big[{{\widehat{{\boldsymbol g }}}}^{\mathrm{LOS}}\big]_{r}\big[{\widehat{\boldsymbol{h}}^{{\rm w}}_{{\rm I}} }\big]_{r}
\beta_{r}(t)e^{j\varphi_{r}(t)}\right] \\ & =\mathbb{E}\!\big[{{\widehat{{\boldsymbol g }}}}^{\mathrm{LOS}}\beta_{r}(t)e^{j\varphi_{r}(t)}]_{r}
\mathbb{E}\!\big[{\widehat{\boldsymbol{h}}^{{\rm w}}_{{\rm I}} }\big]_{r}
= 0, \label{eq:A3}\\
\end{split}
\end{equation}
and
\begin{equation}
\begin{split} 
&\mathbb{E}\!\left[\big[{{\widehat{{\boldsymbol g }}}}^{\mathrm{NLOS}}\big]_{r}\big[{\widehat{\boldsymbol{h}}^{{\rm w}}_{{\rm I}} }\big]_{r}
\beta_{r}(t)e^{j\varphi_{r}(t)}\right] \\ &=\mathbb{E}\!\big[{{\widehat{{\boldsymbol g }}}}^{\mathrm{NLOS}}\beta_{r}(t)e^{j\varphi_{r}(t)}]_{r}
\mathbb{E}\!\big[{\widehat{\boldsymbol{h}}^{{\rm w}}_{{\rm I}} }\big]_{r}
= 0. \label{eq:A4}
\end{split}
\end{equation}
Furthermore, the corresponding variances are expressed as
\begin{equation}
\mathrm{Var}\!\left[\!\big[{{\widehat{{\boldsymbol g }}}}^{\mathrm{LOS}}\!\big]_{r}\big[{\widehat{\boldsymbol{h}}^{{\rm w}}_{{\rm I}} }\big]_{r}
\beta_{r}(t)e^{j\varphi_{r}(t)}\right]
\!\!=\!\! \mathbb{E}\!\left[\big|\big[{\widehat{\boldsymbol{h}}^{{\rm w}}_{{\rm I}} }\big]_{r}\big|^{2}\!\right]
\mathbb{E}\!\left[\big|\beta_{r}(t)\big|^{2}\right],
\label{eq:A5}
\end{equation}
where 
\begin{equation}
\mathbb{E}\!\left[\big|\beta_{r}(t)\big|^{2}\right] = \sum_{i=1}^{2^{b}}P_{i}\alpha_{i}^{2} = \overline{\alpha},
\label{eq:A51}
\end{equation}
 $P_{i}$ denotes the probability that the DRIS phase shift $\varphi_{r}(t)$ 
takes the $i$-th value $\phi_{i}\in\Omega$, i.e.,
$P_{i}=\mathbb{P}(\varphi_{r}(t)=\phi_{i}),\ \forall r$.

Consequently, the variances in~\eqref{eq:A5} can be obtained as
\begin{equation}
\mathrm{Var}\!\left[\big[{{\widehat{{\boldsymbol g }}}}^{\mathrm{LOS}}\big]_{r}\big[{\widehat{\boldsymbol{h}}^{{\rm w}}_{{\rm I}} }\big]_{r}
\beta_{r}(t)e^{j\varphi_{r}(t)}\right]
= \overline{\alpha},
\label{eq:A52}
\end{equation}

Take the same steps as above, we can obtain that
\begin{equation}
\begin{split}
&\mathrm{Var}\!\left[\big[{{\widehat{{\boldsymbol g }}}}^{\mathrm{NLOS}}\big]_{r}\big[{\widehat{\boldsymbol{h}}^{{\rm w}}_{{\rm I}} }\big]_{r}
\beta_{r}(t)e^{j\varphi_{r}(t)}\right]\\
&= \mathbb{E}\!\left[\big|\big[{{\widehat{{\boldsymbol g }}}}^{\mathrm{NLOS}}\big]_{r}\big|^{2}\right]
\mathbb{E}\!\left[\big|\big[{\widehat{\boldsymbol{h}}^{{\rm w}}_{{\rm I}} }\big]_{r}\big|^{2}\right]
\mathbb{E}\!\left[\big|\beta_{r}(t)\big|^{2}\right] 
=\overline{\alpha}.
\label{eq:A6}
\end{split}
\end{equation}

When $N_{D}$ is sufficiently large, according to the Lindeberg--Lévy central limit theorem, 
the normalized sum converges in distribution to a circularly-symmetric 
complex Gaussian random variable, i.e., 
\begin{equation}
       h_{\rm D}^{\rm w}(t)=  { {{\frac{{{\widehat{{\boldsymbol g }}}}
     {{\rm{diag}}\!\left({\boldsymbol \varphi }(t) \right)} {\widehat{\boldsymbol{h}}^{{\rm w}}_{{\rm I}} } }{{  {\mathcal{L}^{\frac{{{\nu_{{\rm{g}}}}}}{2}}} {\mathcal{L}^{\frac{{{\nu^{\rm{w}}_{{\rm{I}}}}}}{2}}} } }}} }
         \mathop  \to \limits^{\rm{d}} \mathcal{CN}\!\left( {0,  {\frac{ N_{\rm D}\overline \alpha}{{ {\mathcal{L}^{{{{\nu _{{\rm{g}}}}}}}} {\mathcal{L}^{{{{\nu^{\rm w}_{{\rm{I}} }}}}}} } }} } \right).
        \label{eq:A7}
    \end{equation}

\section{Proof of Proposition~\ref{Proposition2}}\label{AppendixB}
The proof is analogous to that in Appendix~\ref{AppendixA}.
\end{appendices}

\end{document}